\documentclass[preprint]{elsarticle}

\usepackage{amsfonts}
\usepackage{amsmath}
\usepackage{amsthm}
\usepackage{amssymb}
\usepackage{mathrsfs}
\usepackage{verbatim}
\usepackage{pgfplotstable}
\usepackage{bm}

\usepackage{url}
\usepackage{hyperref}

\def\scrD{\mathscr{D}}

\def\scrH{\mathscr{H}}

\def\calA{\mathcal{A}}

\def\calD{\mathcal{D}}

\def\calH{\mathcal{H}}
\def\calI{\mathcal{I}}
\def\calJ{\mathcal{J}}

\def\calN{\mathcal{N}}

\def\bfE{\mathbf{E}}


\def\bbR{\mathbb{R}}

\def\rmB{\mathrm{B}}
\def\rmC{\mathrm{C}}

\def\rmH{\mathrm{H}}

\def\rmM{\mathrm{M}}

\def\rmB{\mathrm{B}}
\def\rmC{\mathrm{C}}

\def\rmH{\mathrm{H}}

\def\rmM{\mathrm{M}}

\newcommand{\defeq}{\mathrel{:=}}

\newcommand{\sto}{\mathop{\text{\;subject\ to\;}}}
\newcommand{\diag}{\mathop{\mathrm{diag}}} 

\newcommand{\bmx}{\begin{bmatrix}}
\newcommand{\emx}{\end{bmatrix}}
\newcommand{\bsm}{\left[\begin{smallmatrix}}
\newcommand{\esm}{\end{smallmatrix}\right]}

\def\bbZp{\mathbb{Z}_{+}}


\newtheorem{theorem}{Theorem} 
 
\newtheorem{corollary}{Corollary} 
\newtheorem{definition}{Definition} 
\newtheorem{lemma}{Lemma} 
 
\newtheorem{note}{Note} 
 
\newtheorem{example}{Example} 
 
\newtheorem{assumption}{Assumption}

\def\normal{\calN}
\def\dpoint{d}
\def\dpoints{\mathscr{D}}

\def\mcoef{\rmM}
\def\mCoef{\rmM}
\def\mCoefBas#1{\rmM^{(b,s,#1)}}
\def\mCoefSig{\rmM^{(\sigma,s)}}
\def\mcoefSig{\rmM^{(\sigma,s)}}
\def\herm#1#2#3{h^{(#2)}(#1,#3)}

\def\sign{\mathop{\mathrm{sign}}}
\def\argmin{\mathop{\mathrm{argmin}}}

\def\vandmat#1#2{{\Phi}_{#1}(#2)}

\def\psimatcor#1#2#3{\Psi_{\text{als},#3}(#2)}
\def\boxset#1{\blacksquare^{(#1)}}
\def\trgset#1#2{\blacktriangle^{(#1,#2)}}
\def\degset#1#2{\triangle^{(#1,#2)}}

\def\nvar{{\tt q}}
\def\nmind{m}
\def\npoints{N}
\def\typdeg{\ell}

\def\sthat{\mathop{|}}

\bibliographystyle{elsarticle-num}

\begin{document}
\author[g,g2]{Konstantin Usevich\corref{cor}}\ead{konstantin.usevich@gipsa-lab.grenoble-inp.fr}
\author[b]{Ivan Markovsky}\ead{ivan.markovsky@vub.ac.be}
\cortext[cor]{Corresponding author}
\address[g]{Univ. Grenoble Alpes, GIPSA-Lab, F-38000 Grenoble, France}
\address[g2]{CNRS, GIPSA-Lab, F-38000 Grenoble, France}
\address[b]{Department ELEC, Vrije Universiteit Brussel, Pleinlaan 2,  B-1050 Brussels, Belgium}
\title{Adjusted least squares fitting\\ of algebraic hypersurfaces}
\date{}

\begin{abstract} 
We consider the problem of fitting a set of points in Euclidean space by
an algebraic hypersurface. 
We assume that points on a true hypersurface, described by a polynomial equation, are corrupted by zero mean independent Gaussian noise, 
and we estimate the coefficients of the true polynomial equation. 
The adjusted least squares estimator accounts for the bias present in the ordinary least squares estimator. 
The adjusted least squares estimator is based on constructing a quasi-Hankel matrix, which is a bias-corrected matrix of moments. 
For the case of unknown noise variance, the estimator is defined as a solution of a polynomial eigenvalue problem.
In this paper, we present new results on invariance properties of the adjusted least squares estimator and an improved algorithm for computing the estimator for an arbitrary set of monomials in the polynomial equation.
\end{abstract}

\begin{keyword}
hypersurface fitting; curve fitting; statistical estimation; Quasi-Hankel matrix; Hermite polynomials; affine invariance
\MSC[2010] 15A22 \sep 15B05 \sep 33C45 
\sep 62H12 \sep  65D10 \sep 65F15 \sep 68U05
\end{keyword}

\maketitle

\section{Introduction}
An algebraic hypersurface is the set of points $d \in \bbR^{\nvar}$ that are the solutions of an \emph{implicit polynomial equation}
\begin{equation}\label{eq:alg_hyp_equation}
R_{\theta}(d) = 0.
\end{equation}
In \eqref{eq:alg_hyp_equation}, $R_{\theta}(d)$ is a multivariate polynomial with coefficients
$\theta = \bmx\theta_1 & \cdots & \theta_{\nmind} \emx^{\top}$ 
\begin{equation}\label{eq:alg_hyp}
R_{\theta}(d) := \theta^{\top} \phi(d) = \sum\limits_{j=1}^{\nmind} \theta_{j} \phi_j(d),
\end{equation}
where $\phi(d)$ is the vector of linearly independent basis polynomials
\begin{equation}\label{eq:monomial_vec}
\phi(d) = \bmx \phi_1(d) & \cdots & \phi_{\nmind} (d) \emx^{\top}, \quad d \in \bbR^{\nvar}.
\end{equation}
The algebraic hypersurface fitting problem is to fit a given set of points
\[
\dpoints = \{\dpoint^{(1)}, \ldots, \dpoint^{(\npoints)}\} \subset \bbR^{\nvar}, 
\]
in the best way by an algebraic hypersurface of the form \eqref{eq:alg_hyp_equation}, where the vector of basis monomials is given and fixed. The notion of ``best'' is determined by  a chosen goodness-of-fit measure.

Fitting two-dimensional data by conic sections ($\nvar=2$) is the most common case of algebraic hypersurface fitting, with numerous applications in robotics, medical imaging, archaeology, etc., see \cite{Chernov10-Circular} for an overview. Fitting algebraic hypersurfaces of higher degrees and dimensions is needed in computer graphics \cite{Pratt87SCG-Direct}, computer vision \cite{Taubin91IToPAMI-Estimation}, and symbolic-numeric computations \cite{Sauer07NA-Approximate}, \cite[\S 5]{Emiris.etal13TCS-Implicitization}.  The  problem also appears in advanced methods of multivariate data analysis such as subspace clustering \cite{Vidal11ISPM-Subspace} and non-linear system identification \cite{Vajk.Hethessy03A-Identification}, see \cite{kpca} for an overview. Algebraic hypersurface fitting received considerable attention in linear algebra community starting from \cite{Gander.etal94BNM-Least}. Recently, it has been shown to be an instance of nonlinearly structured low-rank approximation \cite[Ch. 6]{Markovsky12-Low}. 

The most widespread fits are \emph{geometric} and \emph{algebraic} fit (see, for example, \cite{Gander.etal94BNM-Least} and \cite{Markovsky.etal04NM-Consistent}). The geometric fit minimizes total distance from $\dpoints$ to a hypersurface defined by \eqref{eq:alg_hyp_equation}. Although this is a natural goodness-of-fit measure, the resulting nonlinear optimization problem is difficult. 

A computationally cheap alternative to geometric fit is the algebraic fit, which minimizes the sum of squared residuals of the implicit equation \eqref{eq:alg_hyp_equation}
\begin{equation}\label{eq:ols_cost_def}
Q_{\text{ols}}(\theta,\dpoints) := \sum_{k=1}^{\npoints} \left(R_{\theta}(\dpoint^{(k)})\right)^2.
\end{equation}
More precisely, the algebraic fit is defined as the solution of 
\begin{equation}\label{eq:ols_opt_problem}
\begin{split}
{\widehat{\theta}}_{\text{ols}} := & \argmin_{\theta \in \bbR^{m}} Q_{\text{ols}}(\theta,\dpoints)  \, \\
                          & \sto \|\theta \| = 1,
\end{split}
\end{equation}
where the normalization $\|\theta \| = 1$ is  needed, since multiplication of $\theta$ by a nonzero constant does not change the hypersurface. 
In statistical literature  \cite{Kukush.etal04CSDA-Consistent,Shklyar.etal07JoMA-conic},
 the algebraic fit is known under the name \emph{ordinary least squares} (OLS), which explains the notation ${\widehat{\theta}}_{\text{ols}}$.
 
If $\|\cdot\|$ is a weighted $2$-norm, then $\widehat{\theta}_{\text{ols}}$ can be found as an eigenvector of a matrix $\Psi(\dpoints)$ constructed from data $\scrD$ (equivalently, a singular vector of the multivariate Vandermonde matrix, see Section~\ref{sec:preliminaries} for more details). The simplicity of the algebraic fit makes it  a method of choice in many applications \cite{Vidal11ISPM-Subspace,Sauer07NA-Approximate}. Also, $\widehat{\theta}_{\text{ols}}$ is often used as an initial guess for local optimization in finding the geometric fit.

Despite of its popularity, the algebraic fit has many deficiencies. A desirable property of a hypersurface fitting is invariance with respect to  all similarity transformations (compositions of translation, rotation and uniform scaling) \cite{Gander.etal94BNM-Least}. The algebraic fit is, in general, not invariant to the similarity transformations; as a result, the 
locally optimal geometric fit initialized with the algebraic fit is also not invariant to these transformations. 

Moreover, the algebraic fit fails in the statistical estimation framework. Assume that $\dpoints$ are generated as 
\begin{equation}\label{eq:eiv}
\dpoint^{(j)} = \overline{\dpoint}^{(j)} + \widetilde{\dpoint}^{(j)},\quad j=1,\ldots,N,
\end{equation}
where $\overline{\dpoint}^{(j)}$ lie on a true hypersurface $R_{\overline{\theta}} (d)= 0$, and the errors are Gaussian
\begin{equation}\label{eq:gaussian_errors}
\widetilde{\dpoint}^{(1)}, \ldots, \widetilde{\dpoint}^{(N)} \sim \normal(0, \Sigma), \quad \{ \widetilde{\dpoint}^{(1)},\ldots, \widetilde{\dpoint}^{(N)}\} \text{ are independent}.
\end{equation}
In this case, it is known \cite{Kukush.etal04CSDA-Consistent} that the ${\widehat{\theta}}_{\text{ols}}$ is not a consistent estimator. 
Informally speaking, if the estimator is inconsistent, the accuracy of estimation of the true parameter $\overline{\theta}$ may not be improved by increasing the number of observed noisy data points $N$ \cite[Ch. 17]{Kendall.Stuart77-advanced}. The algebraic fit is inconsistent due to bias (systematic error) of ${\widehat{\theta}}_{\text{ols}}$ caused by the nonlinear structure of $\Psi(\scrD)$. Besides, geometric fit is also inconsistent, but has a smaller asymptotic bias \cite{Kukush.etal04CSDA-Consistent}.

Many heuristic methods were proposed to overcome the aforementioned problems of algebraic and geometric fit, see a recent book of Chernov \cite{Chernov10-Circular} for an overview (the methods are mainly proposed for ellipsoid fitting). 
In this paper, we focus on \emph{adjusted least squares} fitting, which combines advantages of algebraic and geometric fit, such as low computational complexity, invariance under similarity transformations, and good statistical properties. 

Assume that the data points are generated according to \eqref{eq:eiv} and \eqref{eq:gaussian_errors}, with $\Sigma = \sigma^2 \Sigma_0$. 
The first version of ALS estimator $\widehat{\theta}_{\text{als},\sigma}$ is constructed as eigenvector of the adjusted (bias-corrected) matrix $\Psi$, denoted by $\psimatcor{}{\dpoints}{\sigma}$. The second version of the ALS estimator, for unknown $\sigma^2$, is defined as an eigenvector of the matrix $\psimatcor{}{\dpoints}{\widehat{\sigma}}$, where $\widehat{\sigma} \ge 0$ is a solution of
\begin{equation}\label{eq:als2_equation}
\lambda_{min}(\psimatcor{\calA}{\dpoints}{\widehat{\sigma}}) = 0,
\end{equation}
see Section~\ref{sec:preliminaries} for more details.

The estimators $\widehat{\theta}_{\text{als},\sigma}$ and $\widehat{\theta}_{\text{als}}$ were initially proposed by Kukush, Markovsky and Van Huffel in \cite{Kukush.etal04CSDA-Consistent} for quadratic hypersurfaces and $\Sigma_0 = I$. Some authors, like Chernov \cite{Chernov10-Circular}, use the abbreviation KMvH (from the first letters of the surnames of the authors of \cite{Kukush.etal04CSDA-Consistent}). In this paper, we stick to the name ``adjusted least squares'' (ALS), but we prefer to use the expression ``ALS fitting'' (instead of ``ALS estimation'') in order to emphasize that the ALS fitting can be used beyond the restrictive probabilistic model of errors \eqref{eq:gaussian_errors}.

In Fig.~\ref{fig:conic}, we illustrate the fitting problem for conic sections, i.e. $\nvar = 2$ and the vector $\phi(d)$ given by 
\begin{equation}\label{eq:conic_monomials}
\phi(d) = \bmx d^2_1     & d_1d_2    & d_2^2     & d_1       & d_2       & 1      \emx^{\top}.
\end{equation}
The true points lie on a parabola, and they are perturbed by Gaussian noise. For this example, the ALS fitting is more accurate than the algebraic fit.

\begin{figure}[ht]%
\centering%
\includegraphics[width=0.5\textwidth]{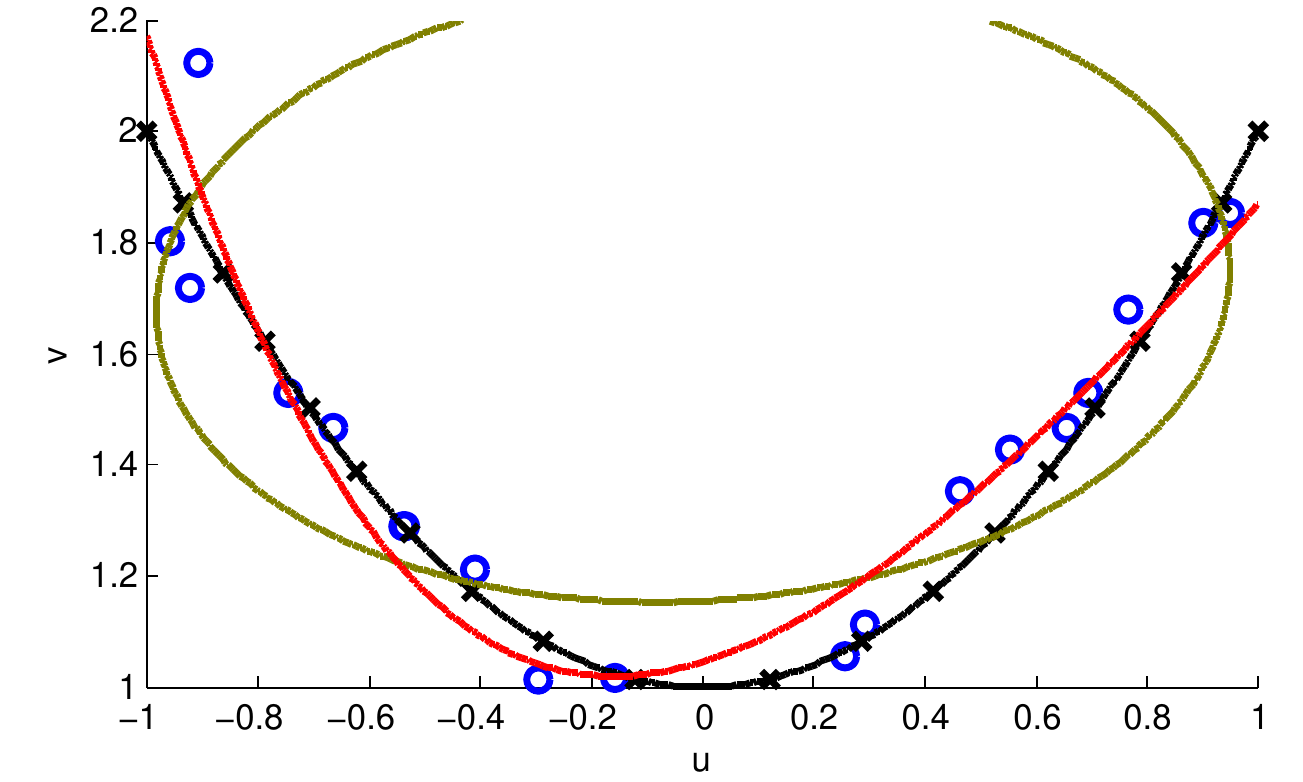}%
\caption{Fitting a parabola: black curve with crosses ---  true curve $R_{\overline{\theta}}(d) = 0$ and true points $\overline{d}^{(j)}$; blue circles --- perturbed points ${d}^{(j)}$; red curve --- approximating curve for $\widehat{\theta}_{\text{als}}$; olive curve --- approximating curve for $\widehat{\theta}_{\text{ols}}$. }%
\label{fig:conic}
\end{figure}
 
For algebraic hypersurfaces of higher degrees, the adjusted least squares fitting was independently proposed by Markovsky \cite[Ch. 6]{Markovsky12-Low} and Shklyar \cite{Shklyar09PhD-Consistency} (published in Ukrainian). In  \cite{Shklyar09PhD-Consistency}, for a general $\Sigma_0$ and for a class of polynomial vectors $\phi(d)$, existence, uniqueness and consistency of $\widehat{\theta}_{\text{als}}$ was proved. However, the construction in \cite{Shklyar09PhD-Consistency} is abstract, and an algorithm for computation of $\widehat{\theta}_{\text{als}}$ was not given. In \cite[Ch. 6]{Markovsky12-Low}, Markovsky showed that \eqref{eq:als2_equation} is equivalent to a polynomial eigenvalue problem, 
and described an algorithm for computing $\widehat{\theta}_{\text{als}}$. (Previously, for quadratic hypersurfaces, the equation \eqref{eq:als2_equation} was solved by bisection \cite{Kukush.etal04CSDA-Consistent,Shklyar.etal07JoMA-conic}.) However, in \cite[Ch. 6]{Markovsky12-Low} only a special case was considered ($\Sigma_0 = I$ and $\phi(d)$ equal to a vector of all monomials of total degree $\le r$).
Also, the coefficients of the matrix polynomial were computed by a recursive algorithm, and the structure of matrix coefficients of the matrix polynomial was not studied.

\paragraph{Contribution of the paper}
In this paper, we consider the general case of $\Sigma_0$ and general set of basis polynomials $\phi(d)$. 
We show that in the general case $\widehat{\theta}_{\text{als}}$ can  also be computed as a solution of a polynomial eigenvalue problem.
Compared with \cite[Ch. 6]{Markovsky12-Low}, we derive the explicit form of the matrix polynomial coefficients. 
If $\phi(d)$ is an arbitrary vector of monomials, we show that the coefficients of the matrix polynomial are quasi-Hankel matrices constructed from the shifts of the array of moments of data.
This simplifies the computation of $\widehat{\theta}_{\text{als}, \sigma}$ and $\widehat{\theta}_{\text{als}}$, and gives an alternative condition for existence of $\widehat{\theta}_{\text{als}}$.
Finally, we derive conditions for rotational/translational/scaling invariance of $\widehat{\theta}_{\text{ols}}$, $\widehat{\theta}_{\text{als}, \sigma}$ and $\widehat{\theta}_{\text{als}}$. In particular, we show that it is important to use the Bombieri norm in order to achieve rotational invariance of $\widehat{\theta}_{\text{ols}}$ and $\widehat{\theta}_{\text{als}, \sigma}$.
We provide numerical results that support the theoretical results of the paper.
Moreover, the numerical results suggest that $\widehat{\theta}_{\text{als}}$ can be used beyond the model \eqref{eq:gaussian_errors} of Gaussian errors and existing consistency results; thus ALS fitting can be used as a general-purpose hypersurface fitting method.
The software implementing the ALS fitting methods, together with reproducible examples, is available at \url{http://github.com/slra/als-fit}.

The paper is organized as follows. In Section~\ref{sec:preliminaries}, we review the existing results. 
In Section~\ref{sec:sets}, we remind the details of construction of $\widehat{\theta}_{\text{ols}}$. Then we give a definition of  $\widehat{\theta}_{\text{als}, \sigma}$ and $\widehat{\theta}_{\text{als}}$ through the deconvolutions (in spirit of \cite{Shklyar09PhD-Consistency}). We provide a brief summary of the main results of \cite{Shklyar09PhD-Consistency} (available only in Ukrainian). Then, we show how the deconvolution can be performed with Hermite polynomials by recalling the construction of \cite[Ch. 6]{Markovsky12-Low} for the case of monomials and $\Sigma_0 = I$. In Section~\ref{sec:comp_adjusted}, we show that the estimators can be computed using quasi-Hankel matrices, and operations on the shifts of the moment array. 
As a corollary, we improve a necessary condition of \cite{Shklyar09PhD-Consistency} for existence of the ALS estimator.
In Section~\ref{sec:invariance}, the invariance properties of the estimators are studied, generalizing the results of \cite{Shklyar.etal07JoMA-conic}.
In Section~\ref{sec:numerical}, we provide numerical experiments that demonstrate the advantages of the ALS fitting.

\section{Main notation and background}\label{sec:preliminaries}
\subsection{Multidegrees and sets of multidegrees}\label{sec:sets}
A nonnegative integer vector $\alpha \in \bbZp^{\nvar}$ corresponds to the monomial $d^{\alpha}$, where 
for $d= \bmx d_1 & \cdots &d_{\nvar} \emx^{\top}$ and $\alpha= \bmx \alpha_1 & \cdots& \alpha_{\nvar} \emx^{\top}$ the power operation $d^{\alpha}$ is defined as
\[
\dpoint^{\alpha} := d_1^{\alpha_1} \cdots d_\nvar^{\alpha_\nvar}.
\]
 Therefore, we refer to nonnegative integer vectors $\alpha \in \bbZp^{\nvar}$ as  \textit{multidegrees}. For $s \le \nvar$, we define
\begin{equation}\label{eq:total_degree}
|\alpha|_s := \alpha_1 + \cdots + \alpha_{s}.
\end{equation}
We also use a shorthand $|\alpha| := |\alpha|_\nvar$, which corresponds to the \emph{total degree} of $d^{\alpha}$. $\alpha+\beta$ denotes the element-wise sum of multidegrees. For multidegrees $\alpha,\beta \in \bbZp^\nvar$, the partial order $\le$ is defined in a standard way:
\begin{equation}\label{eq:part_order}
\alpha \le \beta \iff \alpha_k \le \beta_k, \; \mbox{for}\; k=1,\ldots, \nvar.
\end{equation}
We will frequently use the following sets of multidegrees:
\begin{itemize}
\item \emph{degree-constrained set}: 
\[
\degset{\nvar}{\typdeg}:=\{\alpha \in \bbZp^{\nvar} \,|\, |\alpha| = \typdeg \};
\]
\item \emph{triangular set}: 
\[
\trgset{\nvar}{\typdeg}:= \{\alpha \in \bbZp^{\nvar} \,|\, |\alpha| \le \typdeg \};
\]
\item \emph{box set} $\boxset{\gamma}$ (for $\gamma \in \bbZp^\nvar$):
\[
\boxset{\gamma} \defeq  \{0,1,\ldots,\gamma_1\} \times \cdots \times \{0,1,\ldots,\gamma_\nvar\}.
\]
\end{itemize}
These sets are related using the following evident relations:
\begin{itemize}
\item for any $\nvar$, $\typdeg$ we have
\[
\trgset{\nvar}{\typdeg} = \degset{\nvar}{0} \cup \cdots \cup \degset{\nvar}{\typdeg};
\]
\item for $\nvar = 1$ we have
\[
\trgset{1}{\typdeg} =  {\boxset{\typdeg}} = \{0,1,\ldots,\typdeg\}, \quad \degset{1}{\typdeg} = \{\typdeg\}.
\]
\end{itemize}
The \textit{Minkowski sum} of sets of multidegrees is the set:
\[
\mathfrak{A} + \mathfrak{B} \defeq \{ \alpha + \beta \sthat \alpha \in \mathfrak{A},\; \beta \in \mathfrak{B} \}.
\]
The following examples of the Minkowski sum are evident:
\begin{itemize}
\item ${\boxset{\gamma^{(1)}}} + {\boxset{\gamma^{(2)}}} = {\boxset{\gamma^{(1)}+\gamma^{(2)}}}$;
\item ${\trgset{\nvar}{\typdeg_1}} + {\trgset{\nvar}{\typdeg_2}} = {\trgset{\nvar}{\typdeg_1+\typdeg_2}}$;
\item ${\degset{\nvar}{\typdeg_1}} + {\degset{\nvar}{\typdeg_2}} = {\degset{\nvar}{\typdeg_1+\typdeg_2}}$.
\end{itemize}
A set of multidegrees $\mathfrak{A}$ is a called \emph{lower set}, if for any $\alpha \in \mathfrak{A}$
\begin{equation}\label{eq:lower_set_def}
\beta \le \alpha \Rightarrow \beta \in \mathfrak{A}.
\end{equation}
It is easy to see that $\trgset{\nvar}{\typdeg}$ and $\boxset{\gamma}$ are lower sets, but $\degset{\nvar}{\typdeg}$ is not.

It is convenient to work with matrix representations of the sets of multidegrees.
Assume that a $\nvar \times \nmind$ nonnegative integer matrix is given
\begin{equation}\label{eq:calA}
\calA = \bmx  \alpha^{(1)} & \cdots & \alpha^{(\nmind)} \emx = 
\bmx
\alpha^{(1)}_1 & \cdots & \alpha^{(\nmind)}_1  \\
\vdots         &        & \vdots          \\
\alpha^{(1)}_\nvar & \cdots & \alpha^{(\nmind)}_\nvar   
\emx \in \bbZp^{\nvar \times \nmind}.
\end{equation}
For a set $\mathfrak{A}\subset \bbZp^{\nvar}$ we write $\calA \sim \mathfrak{A}$, if $\mathfrak{A}$ is the set of columns of $\calA$. For each set of multidegrees $\mathfrak{A}$ there exist $m!$ matrix representations (i.e. $\calA$ such that $\calA \sim \mathfrak{A}$). Each representation defines an ordering of multidegrees.

The matrix $\calA$ defines the vector of monomials
\begin{equation}\label{eq:monomials_calA}
\phi_{\calA}(d) :=  \bmx \phi_1(d) & \cdots & \phi_{\nmind} (d) \emx^{\top},\quad \phi_k(\dpoint) := \dpoint^{\alpha^{(k)}}.
\end{equation}
\begin{example}\label{ex:conic2}
The following integer matrix
\begin{equation}\label{eq:calAconic}
\calA = 
\bmx
2 & 1 & 0 & 1 & 0 & 0 \\
0 & 1 & 2 & 0 & 1 & 0 \\
\emx
\end{equation}
corresponds to the vector of monomials \eqref{eq:conic_monomials}.
In this case, algebraic hypersurface fitting coincides with conic section fitting. (See also Fig.~\ref{fig:conic}.)
Also, note that $\calA \sim \trgset{2}{2}$. This set of multidegrees is depicted in Fig.~\ref{fig:calAconic}.
\begin{figure}[ht!]
\centering
\begin{tikzpicture}
\begin{scope}
    \draw (0,0) node[anchor=north west]  {\small$0$};
    \draw (0.5,0) node[anchor=north west]  {\small$1$};
    \draw (1,0) node[anchor=north west]  {\small$2$};
    \draw (0,0) node[anchor=south east]    {\small$0$};
    \draw (0,0.5) node[anchor=south east]    {\small$1$};
    \draw (0,1) node[anchor=south east]    {\small$2$};
    \draw (0,1.7) node[anchor=south east]    {\small$\alpha_2$};
    \draw (1.6,-0.05) node[anchor=north west]    {\small$\alpha_1$};
        
    \draw[color=black, help lines, line width=.1pt] (0,0) grid[xstep=0.5cm, ystep=0.5cm] (2,2);
    \draw[fill=lightgray, thick] (1.5,0) -- (1.5,0.5) -- (1,0.5) -- (1,1) -- (0.5,1) -- (0.5, 1.5) -- (0,1.5) -- (0, 0) -- (1.5,0);
    \draw [->] (0,0) -- (0,2);
    \draw [->] (0,0) -- (2,0);
\end{scope}
\end{tikzpicture}
\caption{Set of multidegrees  $\trgset{2}{2}$.}%
\label{fig:calAconic}
\end{figure}
\end{example}

\subsection{Ordinary least squares estimator}
Now consider the OLS estimator defined in \eqref{eq:ols_opt_problem}. First, we rewrite the cost function in a matrix form. For a set of points $\dpoints = \{\dpoint^{(1)}, \ldots, \dpoint^{(\npoints)}\} \subset \bbR^{\nvar}$, we define the
\textit{multivariate Vandermonde matrix} \cite{Mourrain.Pan00Joc-Multivariate}
as
\begin{equation}\label{eq:vandmat_def}
\vandmat{}{\dpoints} \defeq
\bmx \phi(\dpoint^{(1)}) & \cdots & \phi(\dpoint^{(\npoints)})\emx \in \bbR^{\nmind \times \npoints}.
\end{equation}
Then we have that the vector of the residuals can be expressed as
\[
\bmx R_{\theta}(\dpoint^{(1)}) & \cdots & R_{\theta}(\dpoint^{(\npoints)}) \emx =  \theta^{\top} \vandmat{}{\dpoints}.
\]
Therefore, the OLS cost function \eqref{eq:ols_cost_def} is equal to 
\begin{equation}\label{eq:VtV}
Q_{\text{ols}}(\theta, \dpoints) = \theta^{\top} \Psi(\dpoints) \theta, \quad \mbox{where } \Psi(\dpoints) = \vandmat{}{\dpoints} \left(\vandmat{}{\dpoints}\right)^{\top}.
\end{equation}
Now we consider the case of weighted $2$-norm defined as 
\begin{equation}\label{eq:wnorm}
\|\theta\|^2_w = \sum\limits_{j=1}^{\nmind} w_j \theta^2_j,
\end{equation}
where $w_j \in (0,\infty)$. Then ${\widehat{\theta}}_{\text{ols}}$ is given as a solution of an eigenvalue problem.
\begin{lemma}\label{lem:ols_solution}
For the weighted $2$-norm \eqref{eq:wnorm}, the solution of \eqref{eq:ols_opt_problem} is given by
\[
{\widehat{\theta}}_{\text{ols}} = \Lambda {\widehat{\theta}'}_{\text{ols}},
\]
where ${\widehat{\theta'}}_{\text{ols}}$ is an eigenvector of the symmetric matrix $\Lambda \Psi(\dpoints) \Lambda$ corresponding to its smallest eigenvalue, and $\Lambda := \diag\left(w^{-\frac{1}{2}}_1, \ldots, w^{-\frac{1}{2}}_{\nmind}\right)$. Equivalently ${\widehat{\theta}'}_{\text{ols}}$ is a left singular vector of the matrix $\Lambda \vandmat{}{\dpoints}$ corresponding to its smallest singular value.
\end{lemma}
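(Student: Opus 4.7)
The plan is to reduce the constrained minimization to a standard Rayleigh quotient problem via a diagonal change of variables. Since all $w_j > 0$, the matrix $\Lambda = \diag(w_1^{-1/2}, \ldots, w_\nmind^{-1/2})$ is invertible, and we introduce the new variable $\theta' \defeq \Lambda^{-1} \theta$, so that $\theta = \Lambda \theta'$. A direct computation shows
\[
\|\theta\|_w^2 = \sum_{j=1}^{\nmind} w_j \theta_j^2 = \theta^\top \Lambda^{-2} \theta = (\Lambda\theta')^\top \Lambda^{-2}(\Lambda\theta') = \theta'^\top \theta' = \|\theta'\|_2^2,
\]
so the constraint $\|\theta\|_w = 1$ translates into the ordinary Euclidean unit-norm constraint $\|\theta'\|_2 = 1$.

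Next, I would rewrite the objective using the expression $Q_{\text{ols}}(\theta,\dpoints) = \theta^\top \Psi(\dpoints) \theta$ from \eqref{eq:VtV}: substituting $\theta = \Lambda \theta'$ gives
\[
Q_{\text{ols}}(\theta,\dpoints) = \theta'^\top \bigl(\Lambda \Psi(\dpoints) \Lambda\bigr) \theta'.
\]
Thus \eqref{eq:ols_opt_problem} is equivalent to minimizing $\theta'^\top M \theta'$ over $\|\theta'\|_2 = 1$, where $M \defeq \Lambda \Psi(\dpoints) \Lambda$ is symmetric (and positive semidefinite since $\Psi(\dpoints)$ is). Invoking the Courant--Fischer characterization of eigenvalues of a symmetric matrix, the minimum equals the smallest eigenvalue $\lambda_{\min}(M)$ and is attained on any unit eigenvector ${\widehat{\theta}'}_{\text{ols}}$ associated with $\lambda_{\min}(M)$. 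Inverting the change of variables recovers ${\widehat{\theta}}_{\text{ols}} = \Lambda {\widehat{\theta}'}_{\text{ols}}$.

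For the second, equivalent characterization, I would use the factorization from \eqref{eq:VtV}, namely $\Psi(\dpoints) = \vandmat{}{\dpoints}(\vandmat{}{\dpoints})^\top$, to write
\[
M = \Lambda \vandmat{}{\dpoints}(\vandmat{}{\dpoints})^\top \Lambda = (\Lambda \vandmat{}{\dpoints})(\Lambda \vandmat{}{\dpoints})^\top.
\]
Consequently, the eigenvalues of $M$ are the squares of the singular values of $\Lambda \vandmat{}{\dpoints}$, and the eigenvectors of $M$ are the left singular vectors of $\Lambda \vandmat{}{\dpoints}$. In particular, ${\widehat{\theta}'}_{\text{ols}}$ is a left singular vector corresponding to the smallest singular value, which completes the equivalence.

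The proof is essentially routine linear algebra, so no step is a genuine obstacle; the only care needed is to track that the weighting enters symmetrically as $\Lambda \Psi \Lambda$ (rather than, say, $\Lambda \Psi$), which is ensured by keeping the constraint in quadratic form throughout the derivation.
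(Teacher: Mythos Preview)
Your argument is correct and follows essentially the same route as the paper: the change of variables $\theta = \Lambda \theta'$ turns the weighted constraint into the Euclidean one and the objective into $\theta'^\top \Lambda \Psi(\dpoints)\Lambda\,\theta'$, after which the Rayleigh-quotient characterization gives the result. Your write-up is in fact more explicit than the paper's, since you spell out both the constraint transformation and the SVD equivalence via $M=(\Lambda\Phi(\dpoints))(\Lambda\Phi(\dpoints))^\top$, which the paper merely asserts.
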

\begin{proof}
After a change of variables $\theta = \Lambda \theta'$, the problem \eqref{eq:ols_opt_problem} becomes
\begin{equation}\label{eq:ols_opt_problem_scaled}
\begin{split}
{\widehat{\theta}'}_{\text{ols}} := & \argmin_{\theta' \in \bbR^{m}} \theta'^{\top} \Lambda  \Psi(\dpoints)\Lambda  \theta'  \, \\
                          & \sto \|\theta' \|^2_2 = 1.
\end{split}
\end{equation}
The solution of \eqref{eq:ols_opt_problem_scaled} is given by an eigenvector of the symmetric matrix
$\Lambda \Psi(\dpoints) \Lambda$ corresponding to its smallest eigenvalue.
\end{proof}
Note that the OLS estimator has the following properties.
\begin{note}\label{not:exact_recovery}
In the model \eqref{eq:eiv}, if $\widetilde{\dpoint}^{(j)} \equiv 0$, then $\Psi(\dpoints) = \Psi(\overline{\dpoints})$, where $\overline{\dpoints} = \{\overline{\dpoint}^{(1)}, \ldots, \overline{\dpoint}^{(\npoints)}\}$. Therefore, in this case, if the solution to \eqref{eq:ols_opt_problem} is unique, 
then $\widehat{\theta}_{\text{ols}}$ is equal to $\overline{\theta}$ (up to scaling). In other words, in a non-degenerate noiseless case the OLS estimator recovers the true parameter vector $\overline{\theta}$.
\end{note}
The uniqueness of the solution in Note~\ref{not:exact_recovery} corresponds to uniqueness of the algebraic hypersurface that contains the true data points.
For example, for a set of $4$ points in a general position there is a nonunique conic section
passing through them (see an example in \cite{kpca}).

\begin{note}\label{not:bias}
If the data is noisy, and $\widetilde{\dpoint}^{(j)} \sim \normal(0, \Sigma)$, then, in general,
\begin{equation}\label{eq:bias}
\bfE (\Psi(\dpoints)) \neq \Psi(\overline{\dpoints}),
\end{equation}
where $\bfE(\cdot)$ denotes the mathematical expectation.
\end{note}
Note~\ref{not:bias} gives an explanation why the OLS estimator is inconsistent. 

\subsection{Deconvolution of polynomials}
The construction of the adjusted least squares estimator is based on finding a matrix $\psimatcor{\calA}{\dpoints}{\sigma}$, such that its expectation is equal to $\Psi(\overline{\dpoints})$ in the noise model \eqref{eq:eiv} and \eqref{eq:gaussian_errors} (compare with \eqref{eq:bias}).
For this purpose, following \cite{Shklyar.etal07JoMA-conic}  and \cite{Shklyar09PhD-Consistency}, we introduce the operation of deconvolution.
\begin{definition}\label{def:deconvolution}
For a multivariate polynomial $f$ and a positive-semidefinite covariance matrix $\Sigma$, the deconvolution is defined as
\[
f\ast p_{-\Sigma} \defeq g, \quad \mbox{ where } \mathbf{E}(g(a+x)) = f(a),\quad \mbox{for all } a \in \bbR^{q} \mbox{ and } x \sim \normal(0, \Sigma).
\]
\end{definition}
The deconvolution operation has the following properties \cite[\S 5.1]{Shklyar.etal07JoMA-conic}.
\begin{lemma}\label{lem:deconvolution_prop}
\begin{enumerate}
\item For any polynomial $f$, its deconvolution $f \ast p_{-\Sigma}$ is a polynomial.
\item Deconvolution is linear, i.e., 
\[
(f_1+ cf_2) \ast p_{-\Sigma} = f_1 \ast p_{-\Sigma} + c (f_2 \ast p_{-\Sigma}).
\]
\item For an affine transformation $T(d) = Kd+b$, $K\in \bbR^{\nvar \times \nvar}$, $b \in \bbR^{\nvar}$, we have
\[
(f \circ T) \ast p_{-\Sigma} = (f \ast p_{-K\Sigma K^{\top}}) \circ T,
\]
where $f\circ T$ is a composition of $f$ and $T$, i.e., $(f\circ T)(d) = f(T(d))$
\end{enumerate}
\end{lemma}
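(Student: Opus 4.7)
The three claims can be handled in a common framework by viewing the map $T_\Sigma: g \mapsto (a \mapsto \mathbf{E}(g(a+x)))$, $x \sim \normal(0,\Sigma)$, as a linear operator on the space of polynomials. Deconvolution is then, by definition, the inverse map. The heart of the argument is to show that $T_\Sigma$ is a well-defined bijection on polynomials, which simultaneously yields (1), uniqueness, and then makes (2) and (3) short checks.

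For part~(1), my plan is to work on the monomial basis. For a multidegree $\alpha \in \bbZp^\nvar$, I would expand $(a+x)^\alpha = \prod_k (a_k+x_k)^{\alpha_k}$ by the binomial theorem and take the expectation termwise; since the moments of a Gaussian are finite constants, $T_\Sigma(d^\alpha)(a) = \mathbf{E}((a+x)^\alpha)$ is a polynomial in $a$ of total degree $|\alpha|$, with leading term $a^\alpha$ and lower-order correction terms only in monomials $a^\beta$ with $|\beta|<|\alpha|$. Ordering monomials by total degree makes $T_\Sigma$ lower-triangular with unit diagonal on each finite-degree subspace, hence invertible. The inverse sends polynomials to polynomials of the same total degree, establishing both existence and uniqueness of $f \ast p_{-\Sigma}$, and showing in addition that the deconvolution preserves total degree. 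The main obstacle is simply making this triangular-invertibility argument precise; once it is in place, the rest is formal.

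For part~(2), I would invoke linearity of the expectation: if $g_i := f_i \ast p_{-\Sigma}$ satisfy $\mathbf{E}(g_i(a+x)) = f_i(a)$, then $\mathbf{E}((g_1+cg_2)(a+x)) = f_1(a) + c f_2(a)$, so by uniqueness from part~(1) we have $(f_1+cf_2)\ast p_{-\Sigma} = g_1 + c g_2$.

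For part~(3), let $g := f \ast p_{-K\Sigma K^\top}$, so that $\mathbf{E}(g(b' + y)) = f(b')$ for every $b' \in \bbR^\nvar$ and $y \sim \normal(0,K\Sigma K^\top)$. The key observation is that for $x \sim \normal(0,\Sigma)$, the vector $Kx$ is distributed as $\normal(0, K\Sigma K^\top)$. Then for $a \in \bbR^\nvar$,
\[
\mathbf{E}\bigl((g\circ T)(a+x)\bigr) = \mathbf{E}\bigl(g(Ka + b + Kx)\bigr) = f(Ka+b) = (f\circ T)(a),
\]
using the substitution $b' = Ka+b$, $y = Kx$. Since $g\circ T$ is a polynomial, uniqueness from part~(1) gives $(f\circ T)\ast p_{-\Sigma} = g \circ T = (f \ast p_{-K\Sigma K^\top})\circ T$, as required.
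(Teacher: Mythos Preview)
Your argument is correct. The triangular-invertibility step is clean: expanding $(a+x)^\alpha$ and taking expectations gives $a^\alpha$ plus terms of strictly lower total degree (the coefficients being mixed Gaussian moments, finite for any positive-semidefinite $\Sigma$), so on the space of polynomials of degree $\le n$ the convolution operator is unit-upper-triangular in any graded monomial order, hence bijective. This simultaneously settles well-posedness of Definition~\ref{def:deconvolution} and yields the uniqueness you invoke in parts~(2) and~(3). The change-of-variables step in~(3), using $Kx \sim \normal(0,K\Sigma K^\top)$ for $x \sim \normal(0,\Sigma)$, is exactly right.

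Note, however, that the paper does not actually prove this lemma: it is quoted from \cite[\S 5.1]{Shklyar.etal07JoMA-conic} without argument. So there is nothing to compare your route against within the paper itself. Your write-up is a self-contained proof that would stand on its own; the only suggestion is cosmetic---when you say ``lower-triangular with unit diagonal,'' it helps to name a specific graded order (e.g., graded lexicographic) so the reader sees at once that a total order refining $|\alpha|$ exists.
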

In Section~\ref{sec:als_usigma_constructive} we give an explicit form of deconvolution of monomials with respect to $\Sigma = \sigma^2 I$.

\subsection{Adjusted matrix $\Psi$ and ALS estimator for known variance}
For a covariance matrix $\Sigma = \sigma^2 \Sigma_0$, the \emph{adjusted matrix} $\psimatcor{}{\dpoints}{\sigma} \in \bbR^{\nmind \times \nmind}$ is defined as \cite{Shklyar09PhD-Consistency}
\[
(\psimatcor{}{\dpoints}{\sigma})_{i,j} = \sum\limits_{k=1}^{N} (\phi_i\phi_j) \ast p_{-\sigma^2 \Sigma_0} (\dpoint^{(k)}),
\]
where $\phi_i\phi_j$ is the product of polynomials $\phi_i$ and $\phi_j$.
By Definition~\ref{def:deconvolution}, we have that for ${\dpoints}$ generated according to \eqref{eq:eiv} and \eqref{eq:gaussian_errors}, the equation
\begin{equation}\label{eq:psi_adj_me}
\bfE (\psimatcor{}{\dpoints}{\sigma}) = \Psi(\overline{\dpoints})
\end{equation}
holds true for any set of true points $\overline{\dpoints}$. 

Then the first version of the ALS estimator (for the case of known $\sigma$) is defined as 
\begin{equation}\label{eq:als1_opt_problem}
\begin{split}
{\widehat{\theta}}_{\text{als},\sigma} := & \argmin_{\theta \in \bbR^{m}} Q_{\text{als},\sigma}(\theta,\dpoints),  \, \\
                          & \sto \|\theta \| = 1,
\end{split}
\end{equation}
where the cost function $Q_{\text{als},\sigma}$ is 
\begin{equation}\label{eq:als_cost_def}
Q_{\text{als},\sigma}(\theta,\dpoints) := \theta^{\top}\psimatcor{}{\dpoints}{\sigma} \theta.
\end{equation}
In \cite{Shklyar.etal07JoMA-conic}, this version of ALS estimator is denoted by $\widehat{\theta}_{\text{als}1}$.

Since $Q_{\text{als},\sigma}$ is a quadratic form, we have that the following lemma can be proved analogously to Lemma~\ref{lem:ols_solution}.
\begin{lemma}
In the case of the weighted $2$-norm \eqref{eq:wnorm}, the ALS estimator for known variance is given by  ${\widehat{\theta}}_{\text{als},\sigma} = \Lambda {\widehat{\theta}'}_{\text{als},\sigma}$, where is ${\widehat{\theta}'}_{\text{als},\sigma}$ is an eigenvector of the symmetric matrix $\Lambda \psimatcor{\calA}{\dpoints}{\sigma}  \Lambda$ corresponding to its smallest eigenvalue. 
\end{lemma}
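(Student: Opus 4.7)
The plan is to mimic the proof of Lemma~\ref{lem:ols_solution} verbatim, since the only difference is that the matrix $\Psi(\dpoints)$ is replaced by $\psimatcor{}{\dpoints}{\sigma}$, and both are symmetric.

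First I would perform the change of variables $\theta = \Lambda \theta'$, with $\Lambda = \diag(w_1^{-1/2}, \ldots, w_\nmind^{-1/2})$ as in Lemma~\ref{lem:ols_solution}. Under this substitution, the weighted constraint $\|\theta\|_w = 1$ becomes $\|\theta'\|_2 = 1$, and the cost function \eqref{eq:als_cost_def}, being a quadratic form, transforms into
\[
Q_{\text{als},\sigma}(\Lambda\theta',\dpoints) = \theta'^{\top} \Lambda \psimatcor{}{\dpoints}{\sigma} \Lambda \theta'.
\]
So the problem \eqref{eq:als1_opt_problem} becomes the equivalent problem
\begin{equation*}
\begin{split}
{\widehat{\theta}'}_{\text{als},\sigma} := & \argmin_{\theta' \in \bbR^{m}} \theta'^{\top} \Lambda \psimatcor{}{\dpoints}{\sigma} \Lambda \theta' \\
 & \sto \|\theta'\|_2 = 1.
\end{split}
\end{equation*}

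Next I would invoke the standard Rayleigh quotient characterization: since $\Lambda \psimatcor{}{\dpoints}{\sigma} \Lambda$ is symmetric (note that $\psimatcor{}{\dpoints}{\sigma}$ is symmetric by construction from the symmetric products $\phi_i \phi_j$, and $\Lambda$ is diagonal), the minimum of $\theta'^{\top} \Lambda \psimatcor{}{\dpoints}{\sigma} \Lambda \theta'$ over the unit sphere is attained at a unit eigenvector corresponding to the smallest eigenvalue. Reversing the change of variables gives ${\widehat{\theta}}_{\text{als},\sigma} = \Lambda {\widehat{\theta}'}_{\text{als},\sigma}$.

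There is essentially no obstacle here: in contrast to the OLS case, the matrix $\psimatcor{}{\dpoints}{\sigma}$ need not be positive semidefinite (it is only unbiased for $\Psi(\overline{\dpoints})$ in expectation, cf.\ \eqref{eq:psi_adj_me}), so the singular vector reformulation of Lemma~\ref{lem:ols_solution} is not available; but the eigenvector characterization of the minimum of a symmetric quadratic form on the unit sphere is valid regardless of the sign of the eigenvalues, which is exactly what the statement claims.
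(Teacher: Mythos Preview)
Your proof is correct and follows exactly the approach the paper indicates: the paper does not spell out a proof but simply states that the lemma ``can be proved analogously to Lemma~\ref{lem:ols_solution},'' which is precisely the change of variables $\theta=\Lambda\theta'$ followed by the Rayleigh-quotient argument you give. Your closing remark that the singular-vector reformulation is unavailable because $\psimatcor{}{\dpoints}{\sigma}$ need not be positive semidefinite is also noted in the paper immediately after the lemma.
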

Note that unlike the matrix $\Psi(\dpoints)$, the matrix $\psimatcor{\calA}{\dpoints}{\sigma}$  cannot, in general, be factorized as $B(\dpoints) B^{\top}(\dpoints)$. Moreover, $\psimatcor{\calA}{\dpoints}{\sigma}$ may be indefinite or negative semidefinite, thus the smallest eigenvalue of $\psimatcor{\calA}{\dpoints}{\sigma}$ may be negative.

\subsection{ALS estimator for unknown variance: an abstract definition}
A more important case is when the variance is not known, i.e., when $\Sigma = \sigma^2 \Sigma_0$, and we know only $\Sigma_0$. In \cite{Kukush.etal04CSDA-Consistent}, it was proposed to estimate $\theta$ and $\sigma$ simultaneously (for quadratic hypersurfaces). In  \cite{Shklyar09PhD-Consistency}, this definition was extended to the general class of algebraic hypersurfaces (defined by \eqref{eq:alg_hyp}).
 
The second version of the ALS estimator (with unknown $\sigma^2$) is constructed as follows:
$\widehat{\sigma}$ is a solution of \eqref{eq:als2_equation},
and $\widehat{\theta}_{\text{als}}$ is defined as a solution of
\[
\psimatcor{\calA}{\dpoints}{\widehat{\sigma}} \widehat{\theta}_{\text{als}} = 0, \quad\|\widehat{\theta}_{\text{als}}\| = 1.
\]
In \cite{Shklyar09PhD-Consistency}, many important properties of $\widehat{\theta}_{\text{als}}$ are proved under the following assumption.
\begin{assumption}\label{ass:closedness}
The set of polynomials in \eqref{eq:monomial_vec} is closed under the operation of taking partial derivatives,
i.e., for each $i = 1,\ldots,\nvar$ there exists a matrix $D_i \in \bbR^{\nmind \times \nmind}$ such that
\[
\frac{\partial}{\partial d_i} \phi(d) = D_i \phi(d).
\]
\end{assumption}
Note that, if $\phi$ is given by the matrix of multidegrees \eqref{eq:monomials_calA}, with $\calA \sim \mathfrak{A}$, then Assumption~\ref{ass:closedness} holds if and only if $\mathfrak{A}$ is a lower set \eqref{eq:lower_set_def}. Indeed, if $\alpha \in \mathfrak{A}$, then
\[
\frac{\partial}{\partial d_i} d^{\alpha} = 
\begin{cases}
0, &  \alpha_i = 0, \\
\alpha_i d^{\alpha - e_i}, &  \alpha_i > 0. \\
\end{cases}
\]
In the latter case, if $\mathfrak{A}$ is a lower set, then $\alpha - e_i \in \mathfrak{A}$ for any $i$.
For example, in Example~\ref{ex:conic2}, 
\[
\frac{\partial}{\partial d_1} \phi_{\calA} (d) = 
\bsm
0 & 0 & 0 & 2 & 0 & 0 \\
0 & 0 & 0 & 0 & 1 & 0 \\
0 & 0 & 0 & 0 & 0 & 0 \\
0 & 0 & 0 & 0 & 0 & 1 \\
0 & 0 & 0 & 0 & 0 & 0 \\
0 & 0 & 0 & 0 & 0 & 0 
\esm \phi_{\calA} (d), \quad
\frac{\partial}{\partial d_2} \phi_{\calA} (d) = 
\bsm
0 & 0 & 0 & 0 & 0 & 0 \\
0 & 0 & 0 & 1 & 0 & 0 \\
0 & 0 & 0 & 0 & 2 & 0 \\
0 & 0 & 0 & 0 & 0 & 0 \\
0 & 0 & 0 & 0 & 0 & 1 \\
0 & 0 & 0 & 0 & 0 & 0 
\esm \phi_{\calA} (d).
\]

The next result shows that under Assumption~\ref{ass:closedness} and mild additional conditions, the solution of~\eqref{eq:als2_equation} exists and is unique.
\begin{theorem}[See {\cite[Theorem 3.4]{Shklyar09PhD-Consistency}}]
Assume that the vector $\phi(d)$ satisfies Assumption~\ref{ass:closedness}. Then the following statements hold true.
\begin{itemize}
\item If $\Sigma_0$ is positive definite, then the equation~\eqref{eq:als2_equation} has a unique solution.
\item If $\Sigma_0$ is rank-deficient, then
\begin{itemize}
\item if the matrix $\Psi(\dpoints)$ is positive definite, then the equation~\eqref{eq:als2_equation} has at most one solution;
\item if there exist vectors $h \in \bbR^{\nmind}$, $a \in  \bbR^{\nvar}$ and a scalar $b \in \bbR$, such that
\[
h^{\top} \phi(d) = a^{\top} d + b, \quad \Sigma_0 h \neq 0
\]
then the equation \eqref{eq:als2_equation} has at least one solution.
\end{itemize}
\end{itemize}
\end{theorem}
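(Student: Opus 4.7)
My plan is to analyze $\mu(t) := \lambda_{\min}(M(t))$ as a function of $t = \sigma^2$, where $M(t) := \psimatcor{\calA}{\dpoints}{\sqrt{t}}$. Since each product $\phi_i\phi_j$ is a polynomial of bounded total degree, every entry of $M(t)$ is a polynomial in $t$, so $\mu$ is continuous and piecewise analytic with $\mu(0) = \lambda_{\min}(\Psi(\dpoints)) \geq 0$. Equation~\eqref{eq:als2_equation} asks for a root of $\mu$, so existence will follow from showing that $\mu$ eventually becomes negative, and uniqueness from a strict monotonicity argument at any zero.

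The central tool is a derivative identity obtained from Assumption~\ref{ass:closedness}. Writing deconvolution as the formal heat semigroup, $\psi(t,d) = e^{-\frac{t}{2} L_{\Sigma_0}}[\phi(d)\phi(d)^\top]$ with $L_{\Sigma_0} := \sum_{ab}(\Sigma_0)_{ab}\partial_a\partial_b$, Leibniz's rule gives
\[
L_{\Sigma_0}(\phi\phi^\top) = \tilde D\,\phi\phi^\top + \phi\phi^\top \tilde D^\top + 2 \sum_{ab} (\Sigma_0)_{ab}\, D_a\, \phi\phi^\top\, D_b^\top,
\]
with $\tilde D := \sum_{ab}(\Sigma_0)_{ab} D_a D_b$. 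The constant matrices $D_a, \tilde D$ commute with the heat semigroup, so summing over $\dpoints$ yields
\[
M'(t) = -\tfrac{1}{2}\bigl(\tilde D\, M(t) + M(t)\tilde D^\top\bigr) - \sum_s c_s\, D^{(s)} M(t) (D^{(s)})^\top,
\]
where $\Sigma_0 = \sum_s c_s u^{(s)} (u^{(s)})^\top$ is the spectral decomposition and $D^{(s)} := \sum_a u^{(s)}_a D_a$.

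For uniqueness, consider a unit null vector $v$ with $M(t^*) v = 0$ at a putative zero $t^*$ approached continuously from $M(0)\succeq 0$, so that $M(t^*)\succeq 0$. Then $\tilde D M(t^*) v = 0$, and the envelope form of $\mu$ yields
\[
\mu'(t^*) \leq v^\top M'(t^*) v = -\sum_s c_s\, \bigl((D^{(s)})^\top v\bigr)^\top M(t^*)\bigl((D^{(s)})^\top v\bigr) \leq 0.
\]
A short linear-algebra argument shows strict inequality: when $\Sigma_0\succ 0$ the full span of the $D_a$ forces some $(D^{(s)})^\top v$ outside $\ker M(t^*)$, otherwise $v$ would generate a nontrivial common invariant subspace of the $D^{(s)\top}$'s that annihilates $M(t^*)$, contradicting linear independence of the entries of $\phi$. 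Hence $\mu$ strictly decreases through any zero, ruling out multiple crossings. For the rank-deficient case with $\Psi(\dpoints)\succ 0$, $\mu(0) > 0$ so the first zero is isolated, and the same argument restricted to the range of $\Sigma_0$ gives at most one solution.

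For existence, in the positive-definite case iterating the derivative identity shows that $\trace(M(t))$ is a polynomial in $t$ with strictly negative leading coefficient, forcing $\mu(t) \to -\infty$. In the rank-deficient case with $h^\top\phi(d) = a^\top d + b$ affine and $\Sigma_0 h \neq 0$, direct deconvolution of the quadratic polynomial $(h^\top\phi)^2$ gives
\[
h^\top M(t) h = \sum_{k=1}^N (a^\top d^{(k)} + b)^2 - N\, t\, a^\top \Sigma_0 a,
\]
and the structural hypothesis on $h$ (read, via the identities $D_i^\top h = a_i e_{\text{const}}$, as $a \notin \ker\Sigma_0$) forces $a^\top \Sigma_0 a > 0$, so $h^\top M(t) h$ decreases without bound and existence follows by the intermediate value theorem applied to $\mu$. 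I anticipate the hardest step to be making the uniqueness argument fully rigorous in the rank-deficient case: when $(D^{(s)})^\top v \in \ker M(t^*)$ for every $s$ the first-order derivative vanishes, and one must either invoke higher derivatives of $\mu$ or an explicit analysis of the invariant factors of the matrix polynomial $M(t)$ to exclude that $\mu$ returns to zero after its first descent.
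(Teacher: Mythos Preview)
The paper does not contain a proof of this theorem: it is quoted verbatim from Shklyar's thesis \cite{Shklyar09PhD-Consistency} (in Ukrainian) and stated without argument. So there is no in-paper proof to compare your proposal against. What the paper \emph{does} prove is a separate, weaker existence statement (the theorem immediately following Corollary~\ref{cor:existence}), and it does so by an elementary route: pick a single multidegree $\alpha^{(k)}$ with $|\alpha^{(k)}|_s$ odd, compute the leading $\sigma^{2\ell}$-coefficient of the scalar polynomial $e_k^\top \psimatcor{\calA}{\dpoints}{\sigma} e_k$ explicitly via the Hermite coefficients, observe its sign is $(-1)^\ell<0$, and conclude by the intermediate value theorem. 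No derivative identity, no heat semigroup, no Assumption~\ref{ass:closedness}.

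Your approach is more ambitious and structurally closer to what Shklyar presumably does: the identity
\[
M'(t) = -\tfrac{1}{2}\bigl(\tilde D\, M(t) + M(t)\tilde D^\top\bigr) - \sum_s c_s\, D^{(s)} M(t) (D^{(s)})^\top
\]
is the right engine for uniqueness, and your computation $v^\top M'(t^*)v = -\sum_s c_s \|M(t^*)^{1/2}(D^{(s)})^\top v\|^2 \le 0$ at a first zero is correct. Two points deserve tightening. First, your strictness argument in the positive-definite case can be completed cleanly: if every $(D^{(s)})^\top v$ lies in $\ker M(t^*)$, then (since the $u^{(s)}$ span $\bbR^\nvar$) the $D_a^\top$-invariant subspace generated by $v$ sits in $\ker M(t^*)$; the $D_a^\top$ are jointly nilpotent, so this subspace contains a common null vector $w$ with $D_a^\top w=0$ for all $a$, i.e.\ $w^\top\phi(d)$ is constant, forcing $w$ proportional to the coefficient of the constant monomial---but that vector has $w^\top M(t^*)w = N>0$, a contradiction. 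Second, your existence claim that $\trace M(t)$ has strictly negative leading coefficient is not obviously true in general (diagonal entries of different parities can compete); the paper's trick of isolating one odd-degree diagonal entry is safer, and under Assumption~\ref{ass:closedness} such an entry always exists once $\phi$ is nonconstant.

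One small caution on the rank-deficient existence clause: the condition ``$\Sigma_0 h \neq 0$'' in the statement is dimensionally inconsistent ($\Sigma_0\in\bbR^{\nvar\times\nvar}$, $h\in\bbR^{\nmind}$) and is almost certainly a typo for $\Sigma_0 a \neq 0$; your reading ``$a\notin\ker\Sigma_0$'' is the right one, and with it your computation $h^\top M(t)h = \sum_k (a^\top d^{(k)}+b)^2 - N t\, a^\top\Sigma_0 a$ gives existence directly.
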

\begin{corollary}[See {\cite[Corollary 3.5]{Shklyar09PhD-Consistency}}]\label{cor:existence}
If the solution of \eqref{eq:als2_equation} exists, is unique and is equal to $\widehat{\sigma}$,
then 
\begin{itemize}
\item $\lambda_{min}(\psimatcor{\calA}{\dpoints}{\sigma}) > 0$ for $0 \le \sigma < \widehat{\sigma}$, and
\item $\lambda_{min}(\psimatcor{\calA}{\dpoints}{\sigma}) < 0$ for $\sigma > \widehat{\sigma}$.
\end{itemize}
\end{corollary}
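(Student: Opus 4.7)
The plan is to combine continuity of $f(\sigma) := \lambda_{min}(\psimatcor{\calA}{\dpoints}{\sigma})$ with the assumed uniqueness of its root at $\widehat{\sigma}$, and deduce both sign claims from the intermediate value theorem. Each entry $\sum_{k=1}^{N} (\phi_i \phi_j) \ast p_{-\sigma^2 \Sigma_0}(d^{(k)})$ of $\psimatcor{\calA}{\dpoints}{\sigma}$ is polynomial in $\sigma^2$ (deconvolution of a polynomial of fixed degree gives polynomial dependence on the variance), so the matrix is continuous---in fact polynomial---in $\sigma$, and therefore so is $f$. At $\sigma = 0$ the deconvolution is trivial, giving $\psimatcor{\calA}{\dpoints}{0} = \Psi(\dpoints) = \vandmat{}{\dpoints}(\vandmat{}{\dpoints})^{\top} \succeq 0$, hence $f(0) \ge 0$.

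For the first bullet, assume $\widehat{\sigma} > 0$ (the case $\widehat{\sigma} = 0$ being vacuous). Uniqueness of the root forbids $f$ from vanishing on $[0, \widehat{\sigma})$, so by continuity $f$ has a constant sign there; together with $f(0) \ge 0$ and uniqueness (which precludes $f(0) = 0$ when $\widehat{\sigma} > 0$), this forces $f > 0$ on $[0, \widehat{\sigma})$. The same constant-sign argument shows $f$ has a fixed sign on $(\widehat{\sigma}, \infty)$, so the remaining task is to exhibit just one value $\sigma > \widehat{\sigma}$ at which $f < 0$; equivalently, to show $f(\sigma) \to -\infty$ as $\sigma \to \infty$.

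For that, I would produce a test vector $h \in \bbR^{\nmind}$ whose Rayleigh quotient tends to $-\infty$. Under the hypotheses of Theorem~3.4 that guarantee existence of $\widehat{\sigma}$, one can choose $h$ with $h^{\top} \phi(d) = a^{\top} d + b$ and $a^{\top} \Sigma_0 a > 0$ (any nonzero $a$ works when $\Sigma_0 \succ 0$, and the auxiliary linear-polynomial condition in Theorem~3.4 furnishes such an $h$ in the rank-deficient case). A direct deconvolution of the squared affine polynomial gives
\[
h^{\top} \psimatcor{\calA}{\dpoints}{\sigma} h \;=\; \sum_{k=1}^{N} (a^{\top} d^{(k)} + b)^2 \;-\; N \sigma^2 \, (a^{\top} \Sigma_0 a),
\]
so $f(\sigma) \le h^{\top} \psimatcor{\calA}{\dpoints}{\sigma} h / \|h\|^2 \to -\infty$, and combining with the constant-sign step yields $f < 0$ on $(\widehat{\sigma}, \infty)$.

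The main obstacle is precisely the last step: certifying that $f$ is unbounded below. It is painless for $\Sigma_0 \succ 0$ but depends essentially on the structural assumption of Theorem~3.4 in the rank-deficient setting, which is why, in Shklyar's framework, the existence of $\widehat{\sigma}$ and the sign-change statement of the corollary are proved in tandem rather than independently.
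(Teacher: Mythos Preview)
The paper does not give its own proof of this corollary; it is merely cited from Shklyar's thesis. So there is no ``paper's proof'' to compare against, and your argument has to be judged on its own.

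Your proof of the first bullet is clean and correct: continuity of $\sigma \mapsto \psimatcor{\calA}{\dpoints}{\sigma}$ (polynomial in $\sigma^2$), positive semidefiniteness at $\sigma = 0$, and uniqueness of the zero give $\lambda_{min} > 0$ on $[0,\widehat{\sigma})$ by the intermediate value theorem exactly as you wrote.

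For the second bullet, your reduction to ``exhibit one $\sigma$ with $f(\sigma)<0$'' is the right move, and your deconvolution of the squared affine polynomial,
\[
h^{\top}\psimatcor{\calA}{\dpoints}{\sigma}h \;=\; \sum_{k=1}^{N}(a^{\top}d^{(k)}+b)^{2} - N\sigma^{2}\,a^{\top}\Sigma_0 a,
\]
is correct. You are also right that the bare hypothesis ``the root exists and is unique'' does not by itself rule out the scenario where $\lambda_{min}$ touches zero at $\widehat{\sigma}$ and returns to positive values; one really does need some structural input (positive definiteness of $\Sigma_0$, or the linear-polynomial condition in the rank-deficient case) to force $f\to-\infty$. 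Since the corollary in the paper sits under Assumption~\ref{ass:closedness} and is stated as a consequence of Theorem~3.4, reading those hypotheses as implicitly in force is reasonable, and your closing remark that existence and the sign change are proved ``in tandem'' in Shklyar's framework is an accurate diagnosis of where the content lies.
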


Next, in \cite{Shklyar09PhD-Consistency} it was also proved that under Assumption~\ref{ass:closedness} and some conditions on the true data, the estimator is strongly consistent.
\begin{theorem}[See {\cite[Theorem 3.14]{Shklyar09PhD-Consistency}}]\label{thm:consistency}
Let 
\[
\dpoint^{(1)}, \dpoint^{(2)}, \ldots, \dpoint^{(N)}, \ldots,
\]
be an infinite sequence of points generated as in \eqref{eq:eiv}, and 
\[\widehat{\sigma}_N := \widehat{\sigma}(\dpoints_N) \quad\text{and}\quad \widehat{\theta}_N:= \widehat{\theta}_{\text{als}}(\dpoints_N)\] 
denote the ALS estimators for the first $\npoints$ data points $\dpoints_{\npoints} = \{\dpoint^{(1)}, \dpoint^{(2)}, \ldots, \dpoint^{(N)}\}$.

Also assume that the true points $\overline{\dpoint}^{(n)}$ satisfy the following conditions
\begin{equation}\label{eq:cc_data_finitness}
\begin{split}
\mbox{for any } j, \quad& \sum\limits_{n=1}^{\infty} \frac{1}{n^2}
\left\|\left.\frac{\partial}{\partial d_j} \Psi\left(\{d\}\right)\right|_{d=\overline{\dpoint}^{(n)}}
\right\|^2_F < \infty, \\
\mbox{for any } j_1, j_2, \quad& \sup_{n \ge1} \frac{1}{n} \left\|
\left.\frac{\partial^2}{\partial d_{j_1} d_{j_2}}  \Psi\left(\{d\}\right)\right|_{d=\overline{\dpoint}^{(n)}} \right\|^2_F < \infty,
\end{split}
\end{equation} 
and
\begin{equation}\label{eq:cc_distribution}
\lim_{N\to \infty}  \frac{1}{N} \lambda_{2} \left( \Psi(\overline{\dpoints}_{\npoints}) \right) > 0,
\end{equation}
where $\lambda_2$ is the second smallest eigenvalue of a matrix.

Let $\overline{\theta}$ be the true parameter vector, such that 
\begin{equation}\label{eq:cc_noise_identifiability}
\mbox{the polynomial } f(d,u):=\overline{\theta}^{\top} \phi(d + \Sigma_0 u),
\mbox{ depends on the variable } u.
\end{equation}
If the conditions \eqref{eq:cc_data_finitness}, \eqref{eq:cc_distribution} and \eqref{eq:cc_noise_identifiability} are satisfied,
\[
\widehat{\sigma}_N \to \sigma, \quad \sin \angle(\widehat{\theta}_N, \overline{\theta}) \to 0,
\]
where the convergence is almost surely.
\end{theorem}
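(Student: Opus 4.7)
The plan is a two-step consistency argument: first, I will establish that $\tfrac{1}{N}\psimatcor{\calA}{\dpoints_N}{\sigma}$ concentrates almost surely around a deterministic function of $\sigma$; second, I will use Corollary~\ref{cor:existence} together with an identification argument to pin down the unique zero-crossing of $\lambda_{\min}$ at $\sigma = \overline{\sigma}$ and to recover $\overline{\theta}$ as the limiting eigenvector.

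\textbf{Step 1: entrywise SLLN.} Each entry of $\psimatcor{\calA}{\dpoints_N}{\sigma}$ is a sum over $n = 1,\ldots,N$ of a fixed polynomial in the noisy point $\dpoint^{(n)} = \overline{\dpoint}^{(n)} + \widetilde{\dpoint}^{(n)}$, and by the defining property of deconvolution together with Lemma~\ref{lem:deconvolution_prop}, its expectation equals a polynomial in $\sigma^2$ whose value at $\sigma = \overline{\sigma}$ is $\phi_i(\overline{\dpoint}^{(n)})\phi_j(\overline{\dpoint}^{(n)})$. Standard Gaussian moment bounds show that each centred summand has variance polynomial of controlled degree in $\overline{\dpoint}^{(n)}$, and hypothesis \eqref{eq:cc_data_finitness} is tailored precisely so that Kolmogorov's criterion for the SLLN applies, yielding
\[
\frac{1}{N}\bigl[\psimatcor{\calA}{\dpoints_N}{\sigma} - \bfE\psimatcor{\calA}{\dpoints_N}{\sigma}\bigr] \xrightarrow{\text{a.s.}} 0
\]
entrywise. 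Polynomial dependence on $\sigma$ then upgrades this to uniform convergence on any compact $\sigma$-interval by controlling only finitely many matrix coefficients.

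\textbf{Step 2: identification at the true variance.} By \eqref{eq:psi_adj_me}, $\bfE \psimatcor{\calA}{\dpoints_N}{\overline{\sigma}} = \Psi(\overline{\dpoints}_N)$, and since $R_{\overline{\theta}}(\overline{\dpoint}^{(n)})=0$ at every true point, $\Psi(\overline{\dpoints}_N)\overline{\theta} = 0$. Combined with Step~1, this forces $\lambda_{\min}\bigl(\tfrac{1}{N}\psimatcor{\calA}{\dpoints_N}{\overline{\sigma}}\bigr) \to 0$ almost surely, with $\overline{\theta}$ an approximate null vector. Condition \eqref{eq:cc_distribution} keeps the second-smallest eigenvalue of $\tfrac{1}{N}\Psi(\overline{\dpoints}_N)$ bounded away from zero, so the limiting zero is simple. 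Corollary~\ref{cor:existence} tells me that $\sigma \mapsto \lambda_{\min}(\psimatcor{\calA}{\dpoints_N}{\sigma})$ crosses zero strictly at the unique root $\widehat{\sigma}_N$; combining strict monotonicity with the uniform convergence from Step~1 gives, by a standard zero-crossing stability argument, $\widehat{\sigma}_N \to \overline{\sigma}$ almost surely. Continuity of the eigenvector at a simple eigenvalue then yields $\sin\angle(\widehat{\theta}_N,\overline{\theta}) \to 0$.

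\textbf{Main obstacle.} The delicate step is using \eqref{eq:cc_noise_identifiability} to ensure that $\overline{\theta}$ is the \emph{unique} accumulation direction. Without it, one would have to exclude the possibility that some spurious $\theta \ne \overline{\theta}$ satisfies $\theta^{\top}\psimatcor{\calA}{\dpoints_N}{\widehat{\sigma}_N}\theta \to 0$ merely because the deconvolved polynomial $\theta^{\top}\phi \ast p_{-\sigma^2\Sigma_0}$ vanishes on the support of the true data for some $\sigma \ne \overline{\sigma}$. Translating the ``genuine dependence on $u$'' in \eqref{eq:cc_noise_identifiability} into a quantitative rank condition on the asymptotic matrix---strong enough to guarantee a simple zero eigenvalue at $\overline{\sigma}$ only in the direction of $\overline{\theta}$---is where the core identifiability argument lies, and I expect this to be the main technical hurdle of the proof.
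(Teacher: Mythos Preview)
The paper does not actually prove this theorem: it is quoted verbatim from Shklyar's thesis (the header reads ``See \cite[Theorem~3.14]{Shklyar09PhD-Consistency}''), and no argument is supplied beyond the interpretive Note that follows. So there is no in-paper proof to compare your plan against; I can only comment on the soundness of the plan itself.

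Your Steps~1 and~2 are the right skeleton: Kolmogorov SLLN on the entries (this is exactly what \eqref{eq:cc_data_finitness} is calibrated for), followed by a zero-crossing argument for $\sigma\mapsto\lambda_{\min}$ using Corollary~\ref{cor:existence} on the sample side and eigenvector perturbation at a simple eigenvalue. That part is fine.

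Where your plan drifts is in the role you assign to \eqref{eq:cc_noise_identifiability}. You frame it as ruling out a \emph{spurious direction} $\theta\neq\overline{\theta}$; but uniqueness of the null direction at $\sigma=\overline{\sigma}$ is already delivered by \eqref{eq:cc_distribution}, which pins the second eigenvalue of $\tfrac{1}{N}\Psi(\overline{\dpoints}_N)$ away from zero. The paper's own Note after the theorem spells out what \eqref{eq:cc_noise_identifiability} actually buys: if it fails, then $\overline{\theta}^{\top}\phi(d+\Sigma_0 u)$ is constant in $u$, hence $\overline{\theta}$ annihilates $\psimatcor{}{\dpoints}{\sigma}$ for \emph{every} $\sigma$, and the zero of $\lambda_{\min}$ is no longer isolated at $\overline{\sigma}$. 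In other words, \eqref{eq:cc_noise_identifiability} is a $\sigma$-identifiability condition, not a $\theta$-identifiability condition. Concretely, in your Step~2 you need it to show that the limiting function $\sigma\mapsto\lambda_{\min}\bigl(\lim_N\tfrac{1}{N}\bfE\psimatcor{}{\dpoints_N}{\sigma}\bigr)$ is strictly positive for $\sigma<\overline{\sigma}$ (there $\tfrac{1}{N}\bfE\psimatcor{}{\dpoints_N}{\sigma}$ equals an average of $\Psi(\{\overline{d}^{(n)}+\varepsilon\})$ with $\varepsilon\sim\calN(0,(\overline{\sigma}^2-\sigma^2)\Sigma_0)$, and \eqref{eq:cc_noise_identifiability} is precisely what prevents $\overline{\theta}$ from killing this smoothed matrix) and strictly negative for $\sigma>\overline{\sigma}$ (via the $\sigma$-derivative at $\overline{\sigma}$). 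Reorienting your ``main obstacle'' paragraph around this point would make the plan complete.
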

\begin{note}
The conditions of the Theorem~\ref{thm:consistency} are rather mild.
\begin{itemize}
\item
The condition \eqref{eq:cc_data_finitness} is on boundedness of the data. For example, it is satisfied if all the true data points are within a bounded region. 
\item
The condition \eqref{eq:cc_distribution} ensures that the data points are well-distributed. For example, if the true hypersurface is a union of two hyperplanes, the condition \eqref{eq:cc_distribution} ensures that there are sufficiently many true points on both hyperplanes.
\item
 The condition \eqref{eq:cc_noise_identifiability} means that the noisy vectors do not lie inside the true hypersurface. Indeed, if \eqref{eq:cc_noise_identifiability}  is not satisfied, then $\Psi(\dpoints) = \Psi(\overline{\dpoints})$, and $\overline{\theta}^{\top} \psimatcor{\calA}{\dpoints}{\sigma} = 0$. In this case, $\overline{\theta}$ still can be recovered, but the noise variance $\sigma^2$ cannot.
\end{itemize}
\end{note}

\subsection{ALS estimator for unknown variance: a constructive approach}\label{sec:als_usigma_constructive}
Now we recall the algorithm of \cite[Ch. 6]{Markovsky12-Low}, for computing the ALS estimators in the case $\Sigma_0 = I$ and $\phi = \phi_{\calA}$ is given as in \eqref{eq:monomials_calA}\footnote{In \cite[Ch. 6]{Markovsky12-Low} it was assumed that $\calA \sim \trgset{\nvar}{r}$, but this assumption is not necessary.}.
The construction of the ALS estimators is based on \emph{homogeneous Hermite polynomials}, defined as 
\begin{align*}
\herm{\sigma}{0}{z} & =  1, \\ 
\herm{\sigma}{1}{z} & =  z, \\ 
\herm{\sigma}{k}{z} & =  z \herm{\sigma}{k-1}{z} - (k-1) \sigma^2 \herm{\sigma}{k-2}{z}.  
\end{align*}
The key property of the homogeneous Hermite polynomials is the following {deconvolution} property. 
\begin{lemma}[See {\cite[Ch. 6]{Markovsky12-Low}}]\label{lem:deconvolution}
If $\varepsilon \sim \normal(0,\sigma^2)$, then
\begin{equation}\label{eq:conv}
\bfE (\herm{\sigma}{k}{a+\varepsilon}) = a^k
\end{equation}
for any $a \in \bbR$.
\end{lemma}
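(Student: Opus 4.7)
The plan is to prove the identity by induction on $k$, using the Gaussian integration-by-parts identity (Stein's lemma): for $\varepsilon \sim \normal(0,\sigma^2)$ and any polynomial $g$, one has $\bfE(\varepsilon g(\varepsilon)) = \sigma^2 \bfE(g'(\varepsilon))$. The base cases $k=0,1$ are immediate from $\herm{\sigma}{0}{z}=1$ and $\herm{\sigma}{1}{z}=z$, giving $\bfE(1)=1=a^0$ and $\bfE(a+\varepsilon)=a=a^1$.

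For the inductive step, assuming the claim at orders $k-1$ and $k-2$, I would apply the defining recursion to write
\[
\bfE(\herm{\sigma}{k}{a+\varepsilon}) = \bfE\bigl((a+\varepsilon)\herm{\sigma}{k-1}{a+\varepsilon}\bigr) - (k-1)\sigma^2 \bfE(\herm{\sigma}{k-2}{a+\varepsilon}).
\]
The second term equals $(k-1)\sigma^2 a^{k-2}$ by the inductive hypothesis. The first term splits, by linearity, as $a \cdot \bfE(\herm{\sigma}{k-1}{a+\varepsilon}) + \bfE(\varepsilon \cdot \herm{\sigma}{k-1}{a+\varepsilon}) = a^k + \bfE(\varepsilon \cdot \herm{\sigma}{k-1}{a+\varepsilon})$, again by induction. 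Applying Stein's lemma to the cross term (with the shift $a$ absorbed into the argument), one gets $\bfE(\varepsilon \cdot \herm{\sigma}{k-1}{a+\varepsilon}) = \sigma^2 \bfE\bigl(\tfrac{d}{dz}\herm{\sigma}{k-1}{z}\big|_{z=a+\varepsilon}\bigr)$.

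The crucial intermediate fact, which I would establish as a short side induction, is the differentiation identity $\frac{d}{dz}\herm{\sigma}{k}{z} = k\,\herm{\sigma}{k-1}{z}$. Differentiating the defining recursion gives $\frac{d}{dz}\herm{\sigma}{k}{z} = \herm{\sigma}{k-1}{z} + z \frac{d}{dz}\herm{\sigma}{k-1}{z} - (k-1)\sigma^2 \frac{d}{dz}\herm{\sigma}{k-2}{z}$, and substituting the inductive formula for the two derivatives on the right collapses the expression back to the defining recursion itself, yielding $k\,\herm{\sigma}{k-1}{z}$. Combining this with Stein's identity gives $\bfE(\varepsilon\cdot\herm{\sigma}{k-1}{a+\varepsilon}) = (k-1)\sigma^2 \bfE(\herm{\sigma}{k-2}{a+\varepsilon}) = (k-1)\sigma^2 a^{k-2}$, by the inductive hypothesis. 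The two $(k-1)\sigma^2 a^{k-2}$ contributions cancel exactly, leaving $\bfE(\herm{\sigma}{k}{a+\varepsilon}) = a^k$, which closes the induction.

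There is no serious obstacle: everything reduces to two compatible inductions linked by Stein's lemma. A clean alternative, avoiding the auxiliary derivative identity, would be to verify the exponential generating function $\sum_{k\ge 0} \frac{t^k}{k!}\herm{\sigma}{k}{z} = e^{tz - t^2\sigma^2/2}$ directly from the recursion, then take expectations and use $\bfE(e^{t\varepsilon}) = e^{t^2\sigma^2/2}$ to obtain $\bfE\bigl(e^{t(a+\varepsilon) - t^2\sigma^2/2}\bigr) = e^{ta} = \sum_k \tfrac{t^k a^k}{k!}$; comparing coefficients yields the identity for all $k$ simultaneously. I would stick with the inductive version, as it stays closer to the recursive definition used in the paper.
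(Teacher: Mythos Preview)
Your argument is correct; both the main induction via Stein's lemma and the auxiliary derivative identity $\tfrac{d}{dz}\herm{\sigma}{k}{z} = k\,\herm{\sigma}{k-1}{z}$ go through exactly as you outline, and the generating-function alternative you sketch is also valid. Note, however, that the paper does not actually prove this lemma: it is stated with a citation to \cite[Ch.~6]{Markovsky12-Low} and used as a black box, so there is no ``paper's own proof'' to compare against. Your inductive proof is a self-contained justification that the paper itself omits.
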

\begin{corollary}\label{cor:deconvolution_monomials}
If we define
\begin{equation}\label{eq:monomial}
f_{\beta}(d) = d_1^{\beta_1} \cdots d_{\nvar}^{\beta_\nvar}
\end{equation}
then the deconvolution of a monomial is 
\begin{equation}\label{eq:deconvolution_monomials}
f_{\beta} \ast p_{-\sigma^2I} =  \herm{\sigma}{\beta_1}{\dpoint_1}\herm{\sigma}{\beta_2}{\dpoint_2} \cdots \herm{\sigma}{\beta_{\nvar}}{\dpoint_\nvar}.
\end{equation}
\end{corollary}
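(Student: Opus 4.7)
The plan is to verify that the right-hand side of \eqref{eq:deconvolution_monomials} satisfies Definition~\ref{def:deconvolution} of deconvolution, specialized to $\Sigma = \sigma^2 I$. That is, denoting the candidate polynomial by
\[
g(d) := \herm{\sigma}{\beta_1}{d_1} \herm{\sigma}{\beta_2}{d_2} \cdots \herm{\sigma}{\beta_{\nvar}}{d_\nvar},
\]
I would check that $\bfE(g(a+x)) = f_{\beta}(a)$ for every $a \in \bbR^{\nvar}$ and $x \sim \normal(0, \sigma^2 I)$.

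The first key step is to use the fact that for $x \sim \normal(0, \sigma^2 I)$ the components $x_1, \ldots, x_{\nvar}$ are mutually independent, each distributed as $\normal(0, \sigma^2)$. Consequently the expectation of the product factors coordinatewise:
\[
\bfE(g(a+x)) = \bfE\!\left( \prod_{i=1}^{\nvar} \herm{\sigma}{\beta_i}{a_i + x_i} \right) = \prod_{i=1}^{\nvar} \bfE\!\left( \herm{\sigma}{\beta_i}{a_i + x_i} \right).
\]

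The second step is to apply the univariate deconvolution identity of Lemma~\ref{lem:deconvolution} to each factor: $\bfE(\herm{\sigma}{\beta_i}{a_i + x_i}) = a_i^{\beta_i}$. Substituting this back gives $\bfE(g(a+x)) = \prod_{i=1}^{\nvar} a_i^{\beta_i} = f_{\beta}(a)$, which matches the defining property of $f_{\beta} \ast p_{-\sigma^2 I}$. Since the deconvolution of a polynomial is itself a (unique) polynomial by Lemma~\ref{lem:deconvolution_prop}(1), the equality $f_{\beta} \ast p_{-\sigma^2 I} = g$ follows.

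There is no real obstacle in this argument; the only thing to be careful about is the justification of the product rule for the expectation, which rests solely on the independence of the components of $x$ when the covariance is $\sigma^2 I$. This is exactly why the corollary is stated for $\Sigma = \sigma^2 I$ rather than a general covariance matrix: for non-diagonal $\Sigma_0$, the components of the noise are correlated and the expectation would not factor, so the deconvolution of a monomial would not decouple into a product of univariate Hermite polynomials.
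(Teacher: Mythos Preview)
Your proposal is correct and follows exactly the argument the paper has in mind: the statement is labeled a corollary of Lemma~\ref{lem:deconvolution} and is not given a separate proof in the paper, precisely because the derivation amounts to factoring the expectation over the independent coordinates of $x \sim \normal(0,\sigma^2 I)$ and applying \eqref{eq:conv} componentwise. Your remark about uniqueness via Lemma~\ref{lem:deconvolution_prop}(1) and about why the identity requires $\Sigma = \sigma^2 I$ is also to the point.
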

Using Corollary~\ref{cor:deconvolution_monomials}, we can construct the adjusted matrix $\Psi$  by replacing all monomials in $\Psi(\dpoints)$ by the corresponding polynomials from \eqref{eq:deconvolution_monomials}. More precisely,
from \eqref{eq:VtV}, the $(k,l)$-th element of $\Psi(\dpoints)$ is
\begin{equation}\label{eq:psimat_elem}
(\Psi(\dpoints))_{k,l} = \sum\limits_{j=1}^{\npoints} \phi_{k}(\dpoint^{(j)})\phi_{l}(\dpoint^{(j)}) =
\sum\limits_{j=1}^{\npoints} {(\dpoint^{(j)}_1)}^{\beta_1} \cdots {(\dpoint^{(j)}_{\nvar})}^{\beta_{\nvar}},
\end{equation}
where $\beta = \alpha^{(k)} + \alpha^{(l)}$. Then the  $(k,l)$-th element of $\psimatcor{\calA}{\dpoints}{\sigma}$ is equal to 
\begin{equation}\label{eq:psimatsig_def}
(\psimatcor{\calA}{\dpoints}{\sigma})_{k,l}  =
\sum\limits_{j=1}^{\npoints} \herm{\sigma}{\beta_1}{\dpoint^{(j)}_1}\herm{\sigma}{\beta_2}{\dpoint^{(j)}_2} \cdots \herm{\sigma}{\beta_{\nvar}}{\dpoint^{(j)}_\nvar},
\end{equation}
From \eqref{eq:psimatsig_def}, the matrix $\psimatcor{\calA}{\dpoints}{\sigma}$ has the form
\begin{equation}\label{eq:mpoly}
\psimatcor{\calA}{\dpoints}{\sigma} = \Psi(\dpoints) + \sigma^2 \Psi_1(\dpoints) + \cdots + 
\sigma^{2r} \Psi_r(\dpoints),
\end{equation}
where $r$ is the degree of the polynomial $R_{\theta}(d)$ (i.e., the maximal total degree of $\phi_j(d)$) and $\Psi_k(\calD)$ do not depend on $\sigma$. Indeed, only even powers of $\sigma$ are present in $\psimatcor{\calA}{\dpoints}{\sigma}$, and the highest power corresponds  to the highest total degree of a monomial in $\Psi(\dpoints)$, which is equal to  $2r$. Note also that by Corollary~\ref{cor:existence} it follows that $\widehat{\sigma}$ is the smallest $\sigma$ such that $\psimatcor{\calA}{\dpoints}{\sigma}$ is rank-deficient. Thus $\widehat{\sigma}^2$ is equal to the smallest polynomial eigenvalue of the matrix polynomial \eqref{eq:psimatsig_def}, and $\widehat{\theta}_{\text{als}}$ is its corresponding eigenvector. Thus the solution of the polynomial eigenvalue problem given in \cite[Ch. 6]{Markovsky12-Low} computes the estimator defined in \cite{Shklyar09PhD-Consistency}.

\section{Computation of the ALS estimators and existence of solutions}\label{sec:comp_adjusted}
In this section, we construct the matrix polynomial \eqref{eq:mpoly} for an arbitrary set of basis polynomials $\phi(d)$ and arbitrary $\Sigma_0$. For the case when $\phi(d)$ is a vector of monomials, we show that the matrices $\Psi_j(\dpoints)$ are quasi-Hankel and 
can be constructed using simple operations on the moment array of data.

\subsection{Reduction to the simple case}
In this subsection, we show how the general case can be reduced to the case similar to the one discussed in Section~\ref{sec:als_usigma_constructive}. First, let $\Sigma_0$ be of rank $s$. Then there exists a nonsingular matrix $K \in \bbR^{\nvar \times \nvar}$ such that 
\[
\Sigma_0 = K J_s K^{\top},
\] 
where $J_{\nvar} = I_{\nvar}$ and
\begin{equation}\label{eq:normalized_sigma}
J_s := \bmx I_s & 0 \\ 0 & 0 \emx \in \bbR^{\nvar \times \nvar}\quad\text{for}\quad s < \nvar.
\end{equation}
Now consider the linear transformation of data $T(d) = K^{-1} d$. We have that
\[
\theta^{\top} \phi(d) = \theta^{\top} \phi^{(K)} (T(d)),
\]
where $\phi^{(K)}(d)$ is the transformed vector of basis polynomials
\[
\phi^{(K)}(d):= \phi\circ T^{-1}(d) = \phi(Kd).
\]
Next, if $\widetilde{d} \sim \normal(0, \sigma^2\Sigma_0)$, then $T(\widetilde{d}) \sim \normal(0, \sigma^2 J_s)$. Finally, by Lemma~\ref{lem:deconvolution_prop}, we have that $(f \circ T^{-1}) \ast p_{-\sigma^2 J_s}(T(d)) = f\ast p_{-\sigma^2 \Sigma_0} (d)$, and therefore
\[
\psimatcor{}{\dpoints}{\sigma} =  \psimatcor{}{T(\dpoints)}{\sigma}',
\] 
where $\psimatcor{}{T(\dpoints)}{\sigma}'$ denotes the adjusted matrix for the transformed covariance matrix $\sigma^2 J_s$ and transformed basis polynomials $\phi^{(K)}$. We can summarize these observations as follows.
\begin{note}
Without loss of generality, we can assume that $\Sigma_0 = J_s$. For general $\Sigma_0$, we can always transform the problem to the case $\Sigma_0 = J_s$ by a nonsingular linear transformation of data.
\end{note}
Now assume that $\Sigma_0 = J_s$. For any $\phi$, then there exists a multidegree matrix $\calA \in \bbZp^{\nvar\times \nmind_2}$ and the matrix $F \in \bbR^{\nmind \times \nmind_2}$ such that $\phi(d) = F\phi_{\calA} (d)$, where $\phi_{\calA}$ is defined in \eqref{eq:monomials_calA}. Then we have that
\[
\Psi(\dpoints) = F \Psi_{\calA} (\dpoints) F^{\top},
\]
where $\Psi_{\calA}$ is the matrix $\Psi$ for the vector of basis polynomials $\phi_{\calA}$ given in \eqref{eq:monomials_calA}.
By linearity of the deconvolution operation, we have that 
\[
\psimatcor{}{\dpoints}{\sigma} = F \psimatcor{}{\dpoints}{\calA,\sigma}F^{\top},
\]
where $\psimatcor{}{\dpoints}{\calA,\sigma}$ is the adjusted matrix for the vector of monomials $\phi_{\calA}$.

Now, assume that $\Sigma_0 = J_s$  and $\phi = \phi_{\calA}$ is given as a vector of monomials \eqref{eq:monomials_calA}.
We have that an analogue of Corollary~\ref{cor:deconvolution_monomials} holds.
 \begin{corollary}\label{cor:deconvolution_monomials2}
For a monomial $f_{\beta}(d)$ defined in \eqref{eq:monomial} and $\Sigma_0 = J_s$, the deconvolution of a monomial is equal to
\begin{equation}\label{eq:deconvolution_monomials2}
f \ast p_{-\sigma^2J_s} =  \herm{\sigma}{\beta_1}{\dpoint_1}\cdots \herm{\sigma}{\beta_{s}}{\dpoint_s} \dpoint_{s+1}^{\beta_{s+1}} \cdots \dpoint_{\nvar}^{\beta_\nvar}.
\end{equation}
\end{corollary}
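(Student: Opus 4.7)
The plan is to verify the corollary directly from Definition~\ref{def:deconvolution} by exploiting the product structure of a monomial together with the independence of the coordinates of the noise.

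First, I would observe that when $\Sigma = \sigma^2 J_s$, a random vector $x \sim \normal(0,\sigma^2 J_s)$ has the form $x = \bmx x_1 & \cdots & x_s & 0 & \cdots & 0\emx^{\top}$, where $x_1,\ldots,x_s$ are independent $\normal(0,\sigma^2)$ scalars. Consequently, for any $a \in \bbR^{\nvar}$,
\[
f_{\beta}(a+x) = \prod_{i=1}^{s} (a_i + x_i)^{\beta_i} \cdot \prod_{i=s+1}^{\nvar} a_i^{\beta_i}.
\]

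Next, I would apply the univariate deconvolution identity from Lemma~\ref{lem:deconvolution}, which states $\bfE(\herm{\sigma}{k}{a_i+x_i}) = a_i^{k}$. Let $g(d)$ denote the right-hand side of~\eqref{eq:deconvolution_monomials2}. Then
\[
g(a+x) = \prod_{i=1}^{s} \herm{\sigma}{\beta_i}{a_i+x_i} \cdot \prod_{i=s+1}^{\nvar} a_i^{\beta_i},
\]
since the coordinates $i>s$ of $x$ vanish. By independence of $x_1,\ldots,x_s$, the expectation factors as a product of univariate expectations, yielding
\[
\bfE(g(a+x)) = \prod_{i=1}^{s} \bfE\bigl(\herm{\sigma}{\beta_i}{a_i+x_i}\bigr) \cdot \prod_{i=s+1}^{\nvar} a_i^{\beta_i} = \prod_{i=1}^{s} a_i^{\beta_i} \cdot \prod_{i=s+1}^{\nvar} a_i^{\beta_i} = f_{\beta}(a).
\]
By Definition~\ref{def:deconvolution}, this identifies $g$ as $f_{\beta}\ast p_{-\sigma^2 J_s}$.

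There is no real obstacle here; the only subtle point is recognising that Lemma~\ref{lem:deconvolution} and the product structure of both the monomial and the Gaussian density combine cleanly, and that the degenerate coordinates $s+1,\ldots,\nvar$ of $J_s$ leave the corresponding factors untouched (which is consistent with the convention $\herm{\sigma}{k}{z} = z^k$ when $\sigma = 0$, i.e.\ no deconvolution is needed in those variables). Uniqueness of the deconvolved polynomial, guaranteed by Lemma~\ref{lem:deconvolution_prop}(1), then ensures that the formula~\eqref{eq:deconvolution_monomials2} is the unique answer.
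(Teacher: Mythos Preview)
Your proof is correct and is precisely the argument the paper has in mind: the corollary is stated there without proof, as an immediate analogue of Corollary~\ref{cor:deconvolution_monomials}, and your verification via independence of the first $s$ coordinates together with Lemma~\ref{lem:deconvolution} is exactly how that analogy is meant to be unpacked.
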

From Corollary~\ref{cor:deconvolution_monomials2}, we can compute the adjusted matrix $\Psi$ as in Section~\ref{sec:als_usigma_constructive}. Indeed, the  $(k,l)$-th element of $\psimatcor{\calA}{\dpoints}{\sigma}$ is equal to 
\begin{equation}\label{eq:psimatsig_def2}
(\psimatcor{\calA}{\dpoints}{\sigma})_{k,l}  = 
\sum\limits_{j=1}^{\npoints} \herm{\sigma}{\beta_1}{\dpoint^{(j)}_1} \cdots \herm{\sigma}{\beta_{s}}{\dpoint^{(j)}_s} {\left(\dpoint^{(j)}_{s+1}\right)}^{\beta_{s+1}} \cdots {\left(\dpoint^{(j)}_{\nvar}\right)}^{\beta_\nvar},
\end{equation}
where $\beta = \alpha^{(k)}+ \alpha^{(l)}$. Therefore, the case $\Sigma_0 = J_s$ is analogous to the case considered in Section~\ref{sec:als_usigma_constructive}.
In particular, we have that $\psimatcor{\calA}{\dpoints}{\sigma}$ has the form \eqref{eq:mpoly}, 
where $r = \max |\alpha^{(k)}|_s$ and $|\cdot|_s$ is defined in \eqref{eq:total_degree}.

In the rest of this section, we assume that $\Sigma_0 = J_s$ and $\phi = \phi_{\calA}$ is the vector of monomials defined in \eqref{eq:monomials_calA}.

\subsection{Quasi-Hankel matrices}\label{sec:hankel}
Now we recall the definition of a class of structured matrices that is one of the key ingredients of this paper. Let $\rmB  = \bmx\rmB_{\alpha}\emx_{\alpha \in \bbZp^{\nvar}}$  be an infinite $\nvar$-way array and $\calA$ be a $\nvar\times \nmind$ integer matrix, as in \eqref{eq:calA}. Then the symmetric \textit{quasi-Hankel} \cite{Mourrain.Pan00Joc-Multivariate} matrix $\scrH_{\calA} (\rmB)$, constructed from $\calA$ and $\rmB$ is the following $\nmind \times \nmind$ matrix: 
\[
\big(\scrH_{\calA} (\rmB)\big)_{k,l} = \rmB_{\alpha^{(k)}+\alpha^{(l)}}.
\]
The rows and columns in the symmetric quasi-Hankel matrix correspond to multidegrees from $\calA$.
\begin{note}\label{not:qhank_support}
Let $\mathfrak{A}$ be the set of columns of the matrix $\calA$ (i.e., $\calA \sim \mathfrak{A}$).
Then for construction of $\scrH_{\calA} (\rmB)$ only the elements $\rmB_{\alpha}$ with $\alpha \in \mathfrak{A} + \mathfrak{A}$ are needed.
\end{note}

\begin{example}
Consider a $1$-dimensional ($\nvar = 1$) array $\rmB = \bmx \rmB_{0} & \rmB_{1} & \cdots  \emx^{\top}$, and fix the sets 
\[
\calA = \bmx0 &\cdots&k\emx \sim \boxset{k}.
\]
Then the quasi-Hankel matrix is $\scrH_{\calA} (\rmB) = \calH_{k} (\rmB)$, where
\[
\quad \calH_{k} (\rmB) :=
\bmx
 \rmB_{0}   &  \rmB_{1}   &  \cdots &  \rmB_{k}   \\
 \rmB_{1}   &  \rmB_{2}   &  \cdots &  \rmB_{k+1}   \\
\vdots      &  \vdots     &         &  \vdots    \\
 \rmB_{k}   &  \rmB_{k+1} &  \cdots &  \rmB_{2k} 
\emx,
\]
is the ordinary square Hankel matrix for the sequence $\rmB$. In $\calH_{k} (\rmB)$, only the elements $\rmB_j$, $0 \le j \le 2k$, are used.
\end{example}

\begin{example}
Consider a $2$-dimensional ($\nvar = 2$) array $\rmB = \bmx \rmB_{[i,j]} \emx_{[i,j]\in \bbZp^2}$, and fix the set
\[
\calA = 
\bmx 
0 & \cdots & k & 0 & \cdots & k & \cdots & 0 & \cdots & k \\ 
0 & \cdots & 0 & 1 & \cdots & 1 & \cdots & l & \cdots & l 
\emx \sim  \boxset{[k \ l]^{\top}}.
\]
Then the quasi-Hankel matrix is a symmetric \emph{Hankel-block-Hankel} matrix:
\[
\scrH_{\calA} (\rmB) = 
\bmx
 \calH_{k}(\rmB_{[:,0]}) & \calH_{k}(\rmB_{[:,1]})   & \cdots &  \calH_{k}(\rmB_{[:,l]})   \\
 \calH_{k}(\rmB_{[:,1]}) & \calH_{k}(\rmB_{[:,2]})   & \cdots &  \calH_{k}(\rmB_{[:,l+1]}) \\
\vdots                   & \vdots                    &        &  \vdots    \\
 \calH_{k}(\rmB_{[:,l]}) & \calH_{k}(\rmB_{[:,l+1]}) & \cdots &  \calH_{k}(\rmB_{[:,2l]}) 
\emx,
\]
i.e. a block-Hankel matrix with Hankel blocks constructed from the columns $\rmB{[:,j]}$ of  $\rmB$.
In the case $\nvar > 2$  and $\calA \sim  \boxset{\gamma}$ with $\gamma\in \bbZp^{\nvar}$  (given in the vectorization order), the matrix $\scrH_{\calA} (\rmB)$ is a multilevel Hankel matrix \cite{Fasino.Tilli00LAaiA-Spectral}.
\end{example}

It is easy to see that the matrices $\Psi(\dpoints)$ and $\psimatcor{\calA}{\dpoints}{\sigma}$ are quasi-Hankel.
\begin{lemma}
\begin{enumerate}
\item
The matrix $\Psi(\dpoints)$ is a symmetric quasi-Hankel matrix
\[
\Psi(\dpoints) = \scrH_{\calA} (\mCoef)
\]
where $\mCoef = \bmx \mcoef_\alpha\emx_{\alpha \in \bbZp^q}$ is the infinite \textit{moment array} defined as
\[
\mcoef_\alpha \defeq \sum\limits_{j=1}^{\npoints} (\dpoint^{(j)})^{\alpha}.
\]

\item
For $\Sigma_0 = J_s$, the matrix  $\psimatcor{\calA}{\dpoints}{\sigma}$ is quasi-Hankel
\begin{equation}\label{eq:psimatcor_qhankel}
\psimatcor{\calA}{\dpoints}{\sigma} = \scrH_{\calA}(\mCoefSig), 
\end{equation}
where $\mCoefSig = \bmx \mcoefSig_{\alpha} \emx_{\alpha \in \bbZp^{\nvar}}$ is the \textit{$\sigma$-adjusted moment array}, defined as
\begin{equation}\label{eq:mcoefsig_def}
\mcoefSig_\alpha \defeq 
\sum\limits_{j=1}^{\npoints} \herm{\sigma}{\alpha_1}{\dpoint^{(j)}_1} \cdots \herm{\sigma}{\alpha_{s}}{\dpoint^{(j)}_s}
{\left(\dpoint^{(j)}_{s+1}\right)}^{\alpha_{s+1}} \cdots {\left(\dpoint^{(j)}_{\nvar}\right)}^{\alpha_q},
\end{equation}
such that $\dpoint^{(j)} = \bmx \dpoint^{(j)}_1 & \cdots& \dpoint^{(j)}_\nvar \emx^{\top}$.
\end{enumerate}
\end{lemma}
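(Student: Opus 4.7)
The plan is to verify both claims by direct identification of matrix entries, since the work of establishing the formulas has already been done earlier in the paper. Both parts amount to observing that the $(k,l)$-th entry of the matrix under consideration depends on the column indices $k,l$ only through the sum $\alpha^{(k)} + \alpha^{(l)}$, which is precisely the defining property of a symmetric quasi-Hankel matrix.

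For part (1), I would start from the definition \eqref{eq:VtV} and use the fact that $\phi = \phi_{\calA}$ is given by \eqref{eq:monomials_calA}, so $\phi_k(d) = d^{\alpha^{(k)}}$. Then
\[
(\Psi(\dpoints))_{k,l} = \sum_{j=1}^{\npoints} \phi_k(\dpoint^{(j)}) \phi_l(\dpoint^{(j)}) = \sum_{j=1}^{\npoints} (\dpoint^{(j)})^{\alpha^{(k)}} (\dpoint^{(j)})^{\alpha^{(l)}} = \sum_{j=1}^{\npoints} (\dpoint^{(j)})^{\alpha^{(k)} + \alpha^{(l)}},
\]
which, by the definition of the moment array $\mCoef$, equals $\mcoef_{\alpha^{(k)} + \alpha^{(l)}}$. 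Comparing with the definition of $\scrH_{\calA}$, this is exactly $(\scrH_{\calA}(\mCoef))_{k,l}$. This derivation is already essentially performed in \eqref{eq:psimat_elem}, so the first statement follows immediately.

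For part (2), the key input is the entry-wise formula \eqref{eq:psimatsig_def2} derived in the previous subsection from Corollary~\ref{cor:deconvolution_monomials2}. Setting $\beta = \alpha^{(k)} + \alpha^{(l)}$, that formula reads
\[
(\psimatcor{\calA}{\dpoints}{\sigma})_{k,l} = \sum_{j=1}^{\npoints} \herm{\sigma}{\beta_1}{\dpoint^{(j)}_1} \cdots \herm{\sigma}{\beta_s}{\dpoint^{(j)}_s} \bigl(\dpoint^{(j)}_{s+1}\bigr)^{\beta_{s+1}} \cdots \bigl(\dpoint^{(j)}_{\nvar}\bigr)^{\beta_{\nvar}},
\]
which matches exactly the definition \eqref{eq:mcoefsig_def} of $\mcoefSig_{\beta}$. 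Hence $(\psimatcor{\calA}{\dpoints}{\sigma})_{k,l} = \mcoefSig_{\alpha^{(k)} + \alpha^{(l)}} = (\scrH_{\calA}(\mCoefSig))_{k,l}$.

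There is no genuine obstacle here — both statements are essentially restatements of computations already carried out. The only thing to be careful about is bookkeeping: confirming that the entry formula \eqref{eq:psimatsig_def2} was derived under the standing assumption $\Sigma_0 = J_s$ (which is exactly the hypothesis of part (2)), and noting that symmetry of the quasi-Hankel matrices follows automatically from the fact that the entry depends symmetrically on $k$ and $l$ through $\alpha^{(k)} + \alpha^{(l)}$.
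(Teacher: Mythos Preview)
Your proposal is correct and matches the paper's own proof, which simply says ``Follows immediately from \eqref{eq:psimat_elem} and \eqref{eq:psimatsig_def}.'' You have spelled out exactly the entry-wise identification that the paper leaves implicit, and your reference to \eqref{eq:psimatsig_def2} (rather than \eqref{eq:psimatsig_def}) is in fact the more precise one for the $\Sigma_0 = J_s$ hypothesis in part~(2).
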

\begin{proof}
Follows immediately from \eqref{eq:psimat_elem} and \eqref{eq:psimatsig_def}.
\end{proof}

\subsection{Coefficients of Hermite polynomials and array shifts}\label{sec:comp_M}
For convenience, we denote the coefficients of the Hermite polynomials as
\[
\herm{\sigma}{k}{z} = \sum\limits_{i+j=k} \rmH_{[i,j]} \sigma^{i} z^j.
\]
Then the coefficients of all Hermite polynomials can be arranged in the infinite array $\rmH = \bmx\rmH_{[i,j]}\emx_{[i, j] \in \bbZp^2} $. In Table~\ref{tab:herm}, a part of the infinite array $\rmH$ is shown. 

\begin{table}[!hbt]
\caption{Table of coefficients $\rmH_{[i,j]}$ of the infinite array $\rmH$. Row --- $i$, column --- $j$.}
\label{tab:herm}
\begin{center}
\pgfplotstabletypeset[every head row/.style={before row={},after row=\hline},%
string replace={NaN}{},%
every first column/.style={column type/.add={|}{}}]{runGenTable.out.txt}
\end{center}
\end{table}

The following lemma is evident and can be easily seen from Table~\ref{tab:herm}.
\begin{lemma}\label{lem:hermite_coefs}
For any $t \in \bbZp$ and $j \in \bbZp$,
\begin{enumerate}
\item $\rmH_{[2t+1,j]} = 0$, and
\item $\sign(\rmH_{[2t,j]}) = (-1)^t$.
\end{enumerate}
\end{lemma}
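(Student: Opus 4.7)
The approach is induction on the total degree $k = i+j$, using a coefficient-level form of the three-term recurrence defining the homogeneous Hermite polynomials. Matching the coefficient of $\sigma^i z^j$ (with $i+j = k \geq 2$) on both sides of
\[
\herm{\sigma}{k}{z} = z\,\herm{\sigma}{k-1}{z} - (k-1)\sigma^2\,\herm{\sigma}{k-2}{z}
\]
yields the scalar recurrence
\[
\rmH_{[i,j]} = \rmH_{[i,j-1]} - (k-1)\,\rmH_{[i-2,j]},
\]
with the convention $\rmH_{[i,j]}=0$ whenever $i<0$ or $j<0$, and with base values $\rmH_{[0,0]}=1$, $\rmH_{[0,1]}=1$, $\rmH_{[1,0]}=0$ read off from $\herm{\sigma}{0}{z}=1$ and $\herm{\sigma}{1}{z}=z$.

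For part 1, I would use strong induction on $k$. Assume $\rmH_{[i',j']}=0$ for every odd $i'$ with $i'+j'<k$. If $i=2t+1$ and $i+j=k\geq 2$, each term on the right-hand side of the recurrence either has a negative index (and vanishes by convention) or has odd first component and strictly smaller total index (and vanishes by induction). Hence $\rmH_{[2t+1,j]}=0$, and the single base value $\rmH_{[1,0]}=0$ starts the recursion.

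For part 2, I would again induct on $k=2t+j$, proving the stronger statement that $\rmH_{[2t,j]}$ is nonzero with sign $(-1)^t$. Two boundary families handle the cases where the recurrence degenerates: along $i=0$, it collapses to $\rmH_{[0,j]}=\rmH_{[0,j-1]}$, giving $\rmH_{[0,j]}=1$ for all $j$; along $j=0$ with $t\geq 1$, it collapses to $\rmH_{[2t,0]}=-(2t-1)\,\rmH_{[2t-2,0]}$, so the sign flips at each step while the magnitude stays nonzero. In the generic case $t\geq 1$, $j\geq 1$, the inductive hypothesis yields $\sign \rmH_{[2t,j-1]} = (-1)^t$ and $\sign\!\bigl(-(2t+j-1)\,\rmH_{[2t-2,j]}\bigr) = -(-1)^{t-1} = (-1)^t$, so the two summands on the right-hand side of the recurrence have the same sign and therefore do not cancel; consequently $\rmH_{[2t,j]}$ is nonzero with sign $(-1)^t$.

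The only subtle point, and hence the main obstacle, is checking that the two contributions in the inductive step of part 2 reinforce rather than cancel. This is guaranteed by combining the alternation of sign with $t$ (provided by the inductive hypothesis) with the extra minus sign coming from the $-(k-1)\sigma^2$ term in the recurrence. Everything else reduces to routine bookkeeping on base cases and index ranges.
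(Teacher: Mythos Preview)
Your argument is correct. The coefficient recurrence $\rmH_{[i,j]} = \rmH_{[i,j-1]} - (i+j-1)\,\rmH_{[i-2,j]}$ is exactly what one gets by matching powers in the three-term relation, and your induction on $k=i+j$ handles both claims cleanly. The key observation in part~2---that the two summands on the right-hand side carry the same sign $(-1)^t$ and therefore cannot cancel---is the right thing to isolate, and your boundary cases $i=0$ and $j=0$ are handled correctly.

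As for comparison with the paper: the paper does not actually give a proof. It simply declares the lemma ``evident'' and refers the reader to the table of Hermite coefficients. So your proposal is not a different route to the same argument; it is a genuine proof where the paper offers none. What your approach buys is rigor: the table only displays finitely many entries, so appealing to it is an invitation to the reader to reconstruct precisely the induction you have written out. Conversely, the paper's choice reflects that this lemma is a standard fact about Hermite polynomials (equivalently, one can read off both claims from the closed-form expression $\herm{\sigma}{k}{z} = \sum_{2t\le k} \frac{(-1)^t k!}{2^t t!\,(k-2t)!}\,\sigma^{2t} z^{k-2t}$), and the authors evidently did not think it warranted space. Your inductive proof is self-contained and does not require knowing or deriving that formula, which is a mild advantage in this context.
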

In order to derive a convenient computational procedure for $\mCoefSig$, we need additional notation. For a $\nu \in \bbZp^\nvar$, we define the \textit{Hermite $\nu$-shift} of an infinite array $\rmC$ as
\begin{equation}\label{eq:nushift_def}
\begin{split}
& S_{\nu} (\rmC) := \rmB,\quad \mbox{where} \\
&\rmB_{\alpha} = 
\begin{cases}
\rmC_{\alpha-\nu} \rmH_{[\nu_1, \alpha_1-\nu_1]} \cdots  \rmH_{[\nu_\nvar, \alpha_\nvar-\nu_\nvar]}, & \alpha \ge \nu \\
0, & \alpha_k < \nu_k  \quad \mbox{for some} \; k. 
\end{cases}
\end{split}
\end{equation}
\begin{example}
Consider the moment array 
\begin{equation}\label{eq:momarr_ex}
\mCoef = 
\bmx
\mcoef_{[0,0]} & \mcoef_{[0,1]} & \mcoef_{[0,2]} & \mcoef_{[0,3]} & \mcoef_{[0,4]} & \cdots\\
\mcoef_{[1,0]} & \mcoef_{[1,1]} & \mcoef_{[1,2]} & \mcoef_{[1,3]} & \cdots    &\\
\mcoef_{[2,0]} & \mcoef_{[2,1]} & \mcoef_{[2,2]} & \cdots    &           &\\
\mcoef_{[3,0]} & \mcoef_{[3,1]} & \vdots    &           &           &\\
\mcoef_{[4,0]} & \vdots    &           &           &           &\\
\vdots    &           &           &           &           &\\
\emx.
\end{equation}
(Only elements in $\trgset{2}{4}$ are shown.) Then its Hermite $\nu$-shift, for $\nu = [0,2]$, is
\[
S_{[0,2]} (\mCoef) = 
-
\bmx
0         & 0         & \mcoef_{[0,0]} & 3\mcoef_{[0,1]} & 6\mcoef_{[0,2]} & \cdots\\
0         & 0         & \mcoef_{[1,0]} & 3\mcoef_{[1,1]} & \cdots    &\\
0         & 0         & \mcoef_{[2,0]} & \cdots          &           &\\
0         & 0         & \vdots         &                 &           &\\
0         & \vdots    &                &                 &           &\\
\vdots    &           &                &                 &           &\\
\emx.
\]
\end{example}
The following property of $\nu$-shift immediately follows from Lemma~\ref{lem:hermite_coefs}.
\begin{corollary}\label{cor:nushift_prop}
If at least one element of $\nu$ is odd, then $S_{\nu} (\mCoef) = 0$.
\end{corollary}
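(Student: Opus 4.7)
The statement is an immediate consequence of part (1) of Lemma~\ref{lem:hermite_coefs}, which asserts that $\rmH_{[2t+1,j]} = 0$ for every $t, j \in \bbZp$. My plan is simply to unwind the definition \eqref{eq:nushift_def} of the Hermite $\nu$-shift and observe that any odd component of $\nu$ forces one factor in the defining product to vanish.

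Concretely, I would fix an arbitrary multi-index $\alpha \in \bbZp^{\nvar}$ and split into the two cases of \eqref{eq:nushift_def}. If $\alpha_k < \nu_k$ for some $k$, then $(S_\nu(\mCoef))_\alpha = 0$ by the second branch of the definition and there is nothing to prove. Otherwise $\alpha \geq \nu$, and
$$
(S_\nu(\mCoef))_\alpha \;=\; \mcoef_{\alpha-\nu} \prod_{k=1}^{\nvar} \rmH_{[\nu_k,\, \alpha_k - \nu_k]}.
$$
By hypothesis some index $k_0$ satisfies $\nu_{k_0} = 2t+1$ for a nonnegative integer $t$; Lemma~\ref{lem:hermite_coefs}(1) then gives $\rmH_{[\nu_{k_0},\, \alpha_{k_0}-\nu_{k_0}]} = 0$, so the product vanishes and $(S_\nu(\mCoef))_\alpha = 0$. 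Since $\alpha$ was arbitrary, $S_\nu(\mCoef) = 0$.

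There is no real obstacle here; the substantive content sits in Lemma~\ref{lem:hermite_coefs}(1), which reflects the fact that the homogeneous Hermite polynomials $\herm{\sigma}{k}{z}$ are polynomials in $\sigma^2$ (only even powers of $\sigma$ appear in the recursion $\herm{\sigma}{k}{z} = z\,\herm{\sigma}{k-1}{z} - (k-1)\sigma^2 \herm{\sigma}{k-2}{z}$). The corollary is a one-line bookkeeping statement that will be used later to drop odd shifts from sums over $\nu$.
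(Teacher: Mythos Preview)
Your argument is correct and matches the paper's approach exactly: the paper simply states that the corollary ``immediately follows from Lemma~\ref{lem:hermite_coefs}'' without spelling out the details, and your unwinding of the definition \eqref{eq:nushift_def} together with part~(1) of that lemma is precisely the intended one-line justification.
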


\subsection{Construction of shifted moment arrays}
With the help of the introduced notation, the following theorem holds true.
\begin{theorem}\label{thm:adjusted_moments}
The $\sigma$-adjusted moment arrays $\mCoefSig$ can be computed using Hermite $\nu$-shifts as follows
\begin{equation}\label{eq:mcoefsig_poly}
\mCoefSig = \mCoefBas{0} + \sigma^2 \mCoefBas{1}  + \sigma^4 \mCoefBas{2} + \cdots,
\end{equation}
where $\mCoefBas{k}$ are \textit{basis arrays} for $\mCoefSig$, defined as
\begin{equation}\label{eq:mcoefbas_def}
\mCoefBas{k} := \sum_{\beta \in \degset{s}{k} \times \{(0, \ldots,0)\}} S_{2\beta} (\mCoef).
\end{equation}
In particular, $\mCoefBas{0} = \mCoef$.
\end{theorem}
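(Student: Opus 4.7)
The plan is to derive \eqref{eq:mcoefsig_poly} by expanding the Hermite polynomials in \eqref{eq:mcoefsig_def} into their explicit power series in $\sigma$ and matching the result with the Hermite $\nu$-shifts of $\mCoef$. The proof is a direct computation; all the mechanisms (Hermite coefficients, shifts, the parity lemma) have been assembled, and the theorem is the assertion that they combine in the stated way.

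First, I would plug the expansion $\herm{\sigma}{\alpha_l}{\dpoint_l} = \sum_{i_l+j_l=\alpha_l} \rmH_{[i_l,j_l]} \sigma^{i_l} \dpoint_l^{j_l}$ into \eqref{eq:mcoefsig_def} for each $l = 1,\ldots,s$, leaving the $l>s$ factors as plain monomials. Interchanging the finite sums with the sum over data points and reindexing via $\nu := (i_1,\ldots,i_s,0,\ldots,0) \in \bbZp^{\nvar}$ would give
\[
\mcoefSig_\alpha = \sum_{\substack{\nu\le\alpha\\ \nu_{s+1}=\cdots=\nu_\nvar=0}} \sigma^{|\nu|_s} \Bigl(\prod_{l=1}^{s}\rmH_{[\nu_l,\alpha_l-\nu_l]}\Bigr)\, \mcoef_{\alpha-\nu}.
\]
The sum over data points collapses to $\mcoef_{\alpha-\nu}$ because the remaining monomial in $d^{(j)}$ has multidegree $\alpha-\nu$.

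The next step is to match this expression with shifts. Observe that $\rmH_{[0,k]} = 1$ for every $k \ge 0$ (immediate from $\herm{\sigma}{k}{z}\big|_{\sigma=0} = z^k$, or from Table~\ref{tab:herm}), so for $\nu$ with $\nu_l=0$ whenever $l>s$ the product $\prod_{l=1}^{s}\rmH_{[\nu_l,\alpha_l-\nu_l]}$ coincides with $\prod_{l=1}^{\nvar}\rmH_{[\nu_l,\alpha_l-\nu_l]}$. By the definition \eqref{eq:nushift_def} of the Hermite $\nu$-shift (and noting $|\nu|_s = |\nu|$ for such $\nu$), the displayed identity becomes
\[
\mcoefSig_\alpha \;=\; \sum_{\nu \in \bbZp^s \times\{0\}^{\nvar-s}} \sigma^{|\nu|}\, \bigl(S_\nu(\mCoef)\bigr)_\alpha,
\]
where I drop the constraint $\nu\le\alpha$ since $(S_\nu(\mCoef))_\alpha$ vanishes otherwise.

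Finally, I would invoke Corollary~\ref{cor:nushift_prop}: any $\nu$ with an odd entry contributes zero, so only even shifts $\nu = 2\beta$ with $\beta \in \bbZp^s \times \{0\}^{\nvar-s}$ survive. Grouping these surviving terms according to $k := |\beta|_s$ recovers exactly $\mCoefBas{k}$ as defined in \eqref{eq:mcoefbas_def}, and produces the power series \eqref{eq:mcoefsig_poly}. The $k=0$ term contains only $\beta = 0$, giving $\mCoefBas{0} = S_0(\mCoef) = \mCoef$. The main (minor) obstacle is purely notational: keeping track of which Hermite-coefficient products correspond to shifts $S_\nu$, and observing that the padding $\rmH_{[0,\cdot]}=1$ is what makes the last $\nvar-s$ coordinates of $\nu$ disappear cleanly from the formula.
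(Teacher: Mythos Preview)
Your proposal is correct and follows essentially the same route as the paper's proof: expand the Hermite polynomials in \eqref{eq:mcoefsig_def}, reindex the resulting sum by the shift vector $\nu$, recognise the summand as $(S_\nu(\mCoef))_\alpha$, apply Corollary~\ref{cor:nushift_prop} to eliminate odd $\nu$, and group the surviving even shifts by $|\beta|$. Your observation that $\rmH_{[0,k]}=1$ is what lets the product run over all $\nvar$ coordinates is a point the paper leaves implicit but is exactly what makes its passage from $\prod_{j=1}^{s}$ to $\prod_{j=1}^{\nvar}$ valid.
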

\begin{proof}
Denote $\alpha' = \bmx \alpha_1 & \cdots & \alpha_s & 0 &\cdots &0\emx^{\top}$.
From \eqref{eq:mcoefsig_def} we have that 
\[
\begin{split}
\mcoefSig_\alpha &= 
 \sum\limits_{k=1}^{\npoints}
\prod_{j=1}^{s}\left(\sum\limits_{\nu_j=0}^{\alpha_j} \rmH_{[\nu_j,\alpha_j - \nu_j]} 
\sigma^{\nu_j} (\dpoint^{(k)}_j)^{\alpha_j -\nu_j}\right) 
\prod_{j=s+1}^{\nvar} (\dpoint^{(k)}_j)^{\alpha_j} \\
& = \sum\limits_{k=1}^{\npoints}\sum\limits_{\nu \in \boxset{\alpha'}}
  \sigma^{|\nu|} \prod_{j=1}^{\nvar} \rmH_{[\nu_j,\alpha_j - \nu_j]} (\dpoint^{(k)}_j)^{\alpha_j - \nu_j} \\
& = \sum\limits_{\nu \in \boxset{\alpha'}}
  \sigma^{|\nu|} \left(\prod_{j=1}^{\nvar} \rmH_{[\nu_j,\alpha_j - \nu_j]}\right)
  \left(\sum\limits_{k=1}^{\npoints} (\dpoint^{(k)})^{\alpha - \nu}\right) \\
& = \sum\limits_{\nu \in \boxset{\alpha'}} \sigma^{|\nu|} \left(S_{\nu} (\rmM)\right)_{\alpha}= \sum\limits_{\beta \in \bbZp^{s}\times \{(0, \ldots,0)\}} \sigma^{2|\beta|} \left(S_{2\beta} (\rmM)\right)_{\alpha},
\end{split}
\]
where the last two equalities follow from \eqref{eq:mcoefsig_def}, \eqref{eq:nushift_def} and Corollary~\ref{cor:nushift_prop}. This completes the proof.
\end{proof}
Note that from \eqref{eq:nushift_def}, for any $\alpha$, the coefficient $(\mCoefBas{k})_{\alpha}$ is equal to zero for all large enough $k$. Therefore the sum \eqref{eq:mcoefsig_poly} is element-wise finite and the definition \eqref{eq:mcoefsig_poly} is correct. In addition, the matrices $\Psi_k(\dpoints)$ defined in \eqref{eq:mpoly} are
\[
\Psi_k(\dpoints) = \scrH_{\calA} (\mCoefBas{k}).
\]

\begin{example}
Consider the case $s=\nvar=2$, and the moment array \eqref{eq:momarr_ex}.
We show only the elements in $\trgset{2}{4} = \trgset{2}{2} + \trgset{2}{2}$. Then we have that
\begin{equation}\label{eq:momarr1_ex}
\mCoefBas{1} = 
-
\bmx
0         & 0         & \mcoef_{[0,0]} & 3\mcoef_{[0,1]} & 6\mcoef_{[0,2]} & \cdots\\
0         & 0         & \mcoef_{[1,0]} & 3\mcoef_{[1,1]} & \cdots    &\\
\mcoef_{[0,0]} & \mcoef_{[0,1]} & \mcoef_{[0,2]} + \mcoef_{[2,0]} & \cdots    &           &\\
3\mcoef_{[1,0]} & 3\mcoef_{[1,1]} & \vdots    &           &           &\\
6\mcoef_{[2,0]} & \vdots    &           &           &           &\\
\vdots    &           &           &           &           &\\
\emx,
\end{equation}
and
\begin{equation}\label{eq:momarr2_ex}
\mCoefBas{2} = 
\bmx
0          & 0         & 0           & 0         & 3\mcoef_{[0,0]} & \cdots\\
0          & 0         & 0           & 0         & \cdots    &\\
0          & 0         & \mcoef_{[0,0]}  & \cdots    &           &\\
0          & 0         & \vdots      &           &           &\\
3\mcoef_{[0,0]} & \vdots    &             &           &           &\\
\vdots     &           &             &           &           &\\
\emx.
\end{equation}
For $\calA$ defined in Example~\ref{ex:conic2}, only the elements shown in \eqref{eq:momarr_ex}, \eqref{eq:momarr1_ex} and \eqref{eq:momarr2_ex} will appear in the matrix $\psimatcor{\calA}{\dpoints}{\sigma}$. It is easy to see that this is exactly (up to duplication and scaling of columns and rows) the matrix constructed in \cite{Kukush.etal04CSDA-Consistent,Shklyar.etal07JoMA-conic}.
\end{example}

\subsection{Existence of solutions of the polynomial eigenvalue problem}
Here we prove the existence of solution of \eqref{eq:als2_equation} under weaker assumptions that in \cite{Shklyar09PhD-Consistency}.
More precisely, we do not require Assumption~\ref{ass:closedness}.

\begin{theorem}
Assume that $\Sigma_0 = J_s$ and $\calA$ contains at least one multidegree $\alpha^{(k)}$ such that $|\alpha^{(k)}|_s$  is odd. Then for any data set $\dpoints$ there exists a solution to \eqref{eq:als2_equation} (i.e., there exists $\widehat{\sigma} \ge 0$ such that $\lambda_{min}(\psimatcor{\calA}{\dpoints}{\widehat{\sigma}}) = 0$).
\end{theorem}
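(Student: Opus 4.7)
The plan is to apply the intermediate value theorem to the function $\sigma \mapsto \lambda_{\min}(\psimatcor{\calA}{\dpoints}{\sigma})$ on $[0,\infty)$. This function is continuous because, by~\eqref{eq:mpoly}, the matrix $\psimatcor{\calA}{\dpoints}{\sigma}$ depends polynomially on $\sigma$, and the smallest eigenvalue of a symmetric matrix depends continuously on its entries. At $\sigma = 0$ we have $\psimatcor{\calA}{\dpoints}{0} = \Psi(\dpoints) = \vandmat{}{\dpoints}\vandmat{}{\dpoints}^{\top} \succeq 0$, so $\lambda_{\min}(\psimatcor{\calA}{\dpoints}{0}) \ge 0$. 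It therefore suffices to exhibit a scalar test vector $\theta$ for which $\theta^{\top}\psimatcor{\calA}{\dpoints}{\sigma}\theta$ is eventually non-positive (and, in the non-degenerate case, tends to $-\infty$).

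I would take $\theta = e_k$, where $k$ is an index with $|\alpha^{(k)}|_s$ odd. By~\eqref{eq:psimatsig_def2}, the corresponding diagonal entry is
\[
\bigl(\psimatcor{\calA}{\dpoints}{\sigma}\bigr)_{k,k} = \sum_{j=1}^{\npoints} \Bigl(\prod_{i=1}^{s} \herm{\sigma}{2\alpha^{(k)}_i}{d^{(j)}_i}\Bigr)\; c_j, \qquad c_j := \prod_{i=s+1}^{\nvar}\bigl(d^{(j)}_i\bigr)^{2\alpha^{(k)}_i} \ge 0.
\]
The key computation is the leading $\sigma$-coefficient of a single Hermite factor: from the recursion one shows (by induction or via the closed form $h^{(\sigma)}(z,n)=\sigma^n He_n(z/\sigma)$) that
\[
\herm{\sigma}{2n}{z} = (-1)^n (2n-1)!!\,\sigma^{2n} + \text{(lower order in }\sigma\text{)}.
\]
Multiplying these across $i=1,\dots,s$ gives the leading coefficient in $\sigma^{2|\alpha^{(k)}|_s}$ equal to $(-1)^{|\alpha^{(k)}|_s}\prod_{i=1}^{s}(2\alpha^{(k)}_i-1)!! = -\prod_{i=1}^{s}(2\alpha^{(k)}_i-1)!!$, since $|\alpha^{(k)}|_s$ is odd. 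Consequently the leading coefficient of $(\psimatcor{\calA}{\dpoints}{\sigma})_{k,k}$ in $\sigma^{2|\alpha^{(k)}|_s}$ equals $-\bigl(\prod_{i=1}^{s}(2\alpha^{(k)}_i-1)!!\bigr)\sum_{j=1}^{\npoints} c_j$, which is $\le 0$.

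I split into two cases. If $\sum_j c_j > 0$, the leading coefficient is strictly negative, so $(\psimatcor{\calA}{\dpoints}{\sigma})_{k,k}\to -\infty$ as $\sigma\to\infty$, which forces $\lambda_{\min}(\psimatcor{\calA}{\dpoints}{\sigma})\to -\infty$; combined with $\lambda_{\min}(\psimatcor{\calA}{\dpoints}{0})\ge 0$, the intermediate value theorem produces the required $\widehat{\sigma}\ge 0$. If instead $\sum_j c_j = 0$, then since $c_j\ge 0$ we get $c_j = 0$ for every $j$; this means $(d^{(j)})^{\alpha^{(k)}} = 0$ for all $j$, hence the $k$-th row of $\vandmat{}{\dpoints}$ vanishes, so $\Psi(\dpoints)e_k = 0$ and thus $\lambda_{\min}(\Psi(\dpoints)) = 0$. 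In this degenerate case $\widehat{\sigma} = 0$ itself solves~\eqref{eq:als2_equation}.

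The main obstacle is the sign computation of the leading Hermite coefficient, which has to be done cleanly so that the parity of $|\alpha^{(k)}|_s$ is the parity that produces the negative sign; and the handling of the degenerate sub-case where all $c_j$ vanish, where one needs to notice that vanishing of all $c_j$ already forces $(d^{(j)})^{\alpha^{(k)}} = 0$ (because the ``missing'' factors $i\le s$ in $c_j$ correspond to exponents where the Hermite polynomial absorbs the $d_i$ dependence into its constant term $(-1)^{n_i}(2n_i-1)!!\sigma^{2n_i}$ at $\sigma=0$ equaling $0$ when $n_i>0$—no, more directly, $\alpha^{(k)}$ having $|\alpha^{(k)}|_s$ odd forces some $\alpha^{(k)}_i>0$ with $i\le s$, but the vanishing of $c_j$ concerns only coordinates $i>s$, so one must argue separately that $(d^{(j)})^{\alpha^{(k)}}=0$ from $c_j=0$—this is immediate since $c_j=0$ means some coordinate $d^{(j)}_i$ with $i>s$ and $\alpha^{(k)}_i>0$ vanishes, thereby zeroing the whole monomial $(d^{(j)})^{\alpha^{(k)}}$, unless all exponents $\alpha^{(k)}_i$ for $i>s$ are zero, in which case $c_j=1$ and the sum is $N>0$, contradicting the assumption).
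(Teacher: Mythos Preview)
Your argument is correct and follows the same strategy as the paper: take $\theta = e_k$, identify the leading $\sigma$-coefficient of the diagonal entry $(\psimatcor{\calA}{\dpoints}{\sigma})_{k,k}$ via the top Hermite coefficients $H_{[2\alpha_i,0]} = (-1)^{\alpha_i}(2\alpha_i-1)!!$, and use that $|\alpha^{(k)}|_s$ odd forces this coefficient to be nonpositive. The paper finishes with a principal-minor determinant argument rather than your direct intermediate-value-theorem on $\lambda_{\min}$, but these are equivalent; the one substantive difference is that the paper's proof writes the leading coefficient as $\prod_{i\le s} H_{[2\alpha_i,0]}$ without the factor $\sum_j c_j$ and so never treats the degenerate case $\sum_j c_j = 0$, whereas you do --- and your handling of it (either some $\alpha^{(k)}_i>0$ with $i>s$ forces $(d^{(j)})^{\alpha^{(k)}}=0$ for all $j$ and hence $\lambda_{\min}(\Psi(\dpoints))=0$, or else $c_j\equiv 1$ and $\sum_j c_j=N>0$) is correct, making your version slightly more complete.
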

\textbf{Proof.} 
Let $k$ be such that $|\alpha^{(k)}|_s = \typdeg$, and $\typdeg$ is odd. (For convenience we denote $\alpha = \alpha^{(k)}$.) Take $\theta = e_k$ (unit vector with $k$-th nonzero element). From \eqref{eq:psimatcor_qhankel} and \eqref{eq:mcoefsig_poly}, we have that
\[
Q_{\text{als},\sigma}(\theta,\dpoints) = \theta^{\top}  \psimatcor{\calA}{\dpoints}{\sigma^2} \theta = 
(\mCoefSig )_{2\alpha}  = 
c_0 + c_1 \sigma^2 + \cdots + c_\typdeg \sigma^{2\typdeg}, 
\]
where $c_j = (\mCoefBas{j})_{2\alpha}$.

 Now let us find the leading coefficient $c_\typdeg$. Denote $\alpha' = \bmx \alpha_1 & \cdots & \alpha_s & 0 &\cdots &0\emx^{\top}$.
By \eqref{eq:nushift_def}, $(S_{2\beta} (\mCoef))_{2\alpha} = 0$ for any $\beta \in (\degset{s}{\ell} \times (0, \ldots,0)) \setminus \alpha'$. Therefore, from \eqref{eq:mcoefbas_def} we have that 
\[
c_\typdeg = \rmH_{[2\alpha_1, 0]} \cdots  \rmH_{[2\alpha_{s}, 0]}.
\]
By Lemma~\ref{lem:hermite_coefs}, $\sign(\rmH_{[2t, 0]}) = (-1)^t$. Therefore,
\[
\mathop{\mathrm{sign}}(c_d) = (-1)^{\ell} < 0.
\]
Thus, there exists $\sigma_0>0$ such that $Q_{\text{als},\sigma_0}(\theta,\dpoints) < 0$ and $\psimatcor{\calA}{\dpoints}{\sigma_0}$ is not positive semidefinite. Hence, there exists a principal minor of $\psimatcor{\calA}{\dpoints}{\sigma_0}$, such that its determinant is negative at $\sigma_0$. Since the determinant of any minor is a polynomial function of $\sigma$, there exists $\sigma_1$, $0 \le \sigma_1 < \sigma_0$ such that one of the minors of $\psimatcor{\calA}{\dpoints}{\sigma_1}$ is zero and all the minors of $\psimatcor{\calA}{\dpoints}{\sigma}$, for $0 \le\sigma\le\sigma_1$ are nonnegative. Thus, $\psimatcor{\calA}{\dpoints}{\sigma_1}$ is rank deficient and positive semidefinite, which completes the proof.
\hfill$\Box$

\section{Invariance properties of the estimators}\label{sec:invariance}
In this section, we assume that $\Sigma_0 = I$, and $\phi = \phi_{\calA}$ is given as \eqref{eq:monomials_calA}.

\subsection{Affine transformations and summary of results}
An affine transformation in $\bbR^{\nvar}$ is
\begin{equation}\label{eq:affine_trans}
T(d) := Kd + h,
\end{equation}
where $K \in \bbR^{\nvar \times \nvar}$ is a nonsingular matrix and $h \in \bbR^{\nvar}$.
We consider the following basic transformations:
\begin{enumerate}
\item \textit{orthogonal transformation}: $h=0$, $K$ --- orthogonal matrix ($KK^{\top} = K^{\top}K = I$), which includes rotation and reflections;
\item \textit{translation}: $K = I$, $h \neq 0$; and
\item \textit{uniform scaling}: $h = 0$, $K = \rho I$.
\end{enumerate}
All compositions of these basic transformations comprise the class of affine similarity transformations.

In Table~\ref{tab:invariance}, we summarize the conditions on the set of monomials $\calA$ under which the estimators are invariant for any given data $\dpoints$. The rows in Table~\ref{tab:invariance} correspond to the basic transformations and the columns correspond to the estimators (including the weighted norm under consideration).

\begin{table}[!hbt]
\caption{Summary of invariance properties of the estimators}
\label{tab:invariance}
\begin{center}
\begin{tabular}{|c|c|c|c|}\hline
& $\widehat{\theta}_{\text{ols}}$          & $\widehat{\theta}_{\text{als},\sigma}$ & $\widehat{\theta}_{\text{als}}$ \\\cline{2-4}
& \multicolumn{2}{|c|}{Bombieri norm} & any norm \\\hline
Orthogonal transformation & \multicolumn{3}{|c|}{$\calA \sim \degset{\nvar}{\typdeg_1} \cup \cdots \cup \degset{\nvar}{\typdeg_M}$ (Theorem~\ref{thm:rotation_invariance})}\\[1.5ex]\hline
Uniform scaling           & --- & --- &any $\calA$ (Theorem~\ref{thm:scaling_translation_invariance}) \\[1.5ex]\hline
Translation               & --- & --- &$\calA \sim \trgset{\nvar}{\typdeg}$ (Theorem~\ref{thm:scaling_translation_invariance}) \\[1.5ex]\hline
\end{tabular}
\end{center}
\end{table}

Most of the results are proved for the Bombieri norm.
\begin{definition}
The \emph{Bombieri norm} $\|\cdot\|^2_B$ is defined as
\begin{equation}\label{eq:bombieri}
\|\theta\|^2_B = \sum_{j=1}^{\nmind} \frac{\alpha^{(j)}_1! \cdots  \alpha^{(j)}_\nvar!}{(\alpha^{(j)}_1 + \cdots +\alpha^{(j)}_{\nvar})!} \theta_j^2,
\end{equation}
i.e., the coefficients are normalized by a multinomial coefficient.
\end{definition}
The Bombieri norm has the advantage that it is rotation-invariant. It is important to use the  Bombieri norm (and not just 2-norm, as in \cite{Markovsky12-Low}), in order to have rotation-invariant $\widehat{\theta}_{\text{ols}}$ and $\widehat{\theta}_{\text{als},\sigma}$ estimators.
\begin{example}
In Example~\ref{ex:conic2}, the Bombieri norm of the parameter vector is equal to
\[
\|\theta\|^2_B = \|A\|^2_F+ \|b\|^2_2 + c^2,
\]
where $\|\cdot\|_F$ is the Frobenius norm and $(A,b,c)$ are classic parameters for conic sections, i.e., $A\in \bbR^{2\times 2}$ (symmetric), $b\in \bbR^{2}$ and $c \in \bbR$ such that
\[
R_{\theta}(d) = d^{\top} A d +d^\top b + c.
\]
Thus the Bombieri norm coincides with the norm used in \cite{Shklyar.etal07JoMA-conic}.
\end{example}

\subsection{Some preliminary remarks}
Second, we note that the cost function $Q_{\text{als},\sigma}(\theta,\dpoints)$ defined in \eqref{eq:als_cost_def} can be expressed as a deconvolution of the cost function $Q_{\text{ols}}$.
\begin{equation}\label{eq:als_cost_def_deconv}
\begin{split}
Q_{\text{als},\sigma}(\theta,\dpoints) &= \theta^{\top} \psimatcor{\calA}{\dpoints}{\sigma} \theta = \sum_{k=1}^{\npoints} \sum\limits_{i,j=1}^{\nmind,\nmind} \theta_i \theta_j (\phi_i\phi_{j})\ast p_{-\sigma^2 \Sigma_0}(\dpoint^{(k)})\\
&= \sum_{k=1}^{\npoints} \left(\sum\limits_{j=1}^{\nmind} \theta_j \phi_{j}\right)^2\ast p_{-\sigma^2 \Sigma_0}(\dpoint^{(k)}) = \sum_{k=1}^{\npoints} R^2_{\theta} \ast p_{-\sigma^2 \Sigma_0}(\dpoint^{(k)}).
\end{split}
\end{equation}
In particular, the cost function \eqref{eq:als_cost_def_deconv} has the following property
\begin{equation}\label{eq:als_cost_def2}
\bfE (Q_{\text{als},\sigma}(\theta,\dpoints)) = Q_{\text{ols}}(\theta,\overline{\dpoints}).
\end{equation}

Second, we rewrite the \eqref{eq:als2_equation} using $Q_{\text{als},\sigma}(\theta,\dpoints)$.
The pair $(\widehat{\theta}_{\text{als}},\widehat{\sigma})$ is the solution of the following system of equations
\begin{equation}\label{eq:als2_opt_problem2}
\begin{split}
& Q_{\text{als},\widehat{\sigma}}(\widehat{\theta},\dpoints) = 0,\quad \|\widehat{\theta} \| = 1  \\
& Q_{\text{als},\widehat{\sigma}}(b,\dpoints) \ge 0 \mbox{ for all } b \neq 0.
\end{split}
\end{equation}

\subsection{Formal definition of invariance}
The estimation problems \eqref{eq:ols_opt_problem}, \eqref{eq:als1_opt_problem} and \eqref{eq:als2_opt_problem2} may have non-unique solutions. In order to handle this property, we introduce additional notation following \cite{Shklyar.etal07JoMA-conic}.
Let us fix an estimation problem and denote by
\[
\mathop{\mathrm{Sol}}(\dpoints) := \mbox{ set of solutions $\widehat{\theta}$ of the problem for a given }\dpoints.
\]
Then we can introduce a formal definition of invariance of a problem.
\begin{definition}[See {\cite[Definition 25]{Shklyar.etal07JoMA-conic}}]\label{def:invariance}
For a given set of points $\dpoints$, the estimation problem is called
\begin{itemize}
\item $T\Rightarrow$\emph{invariant}, if for all $\theta_1 \in 
\mathop{\mathrm{Sol}}(\dpoints)$ there exists $\theta_2 \in 
\mathop{\mathrm{Sol}}(T(\dpoints))$ such that
\[
R_{\theta_1}(d) = 0 \iff R_{\theta_2}(T(d)) = 0;
\]
\item $T\Leftarrow$\emph{invariant}, if for all $\theta_2 \in 
\mathop{\mathrm{Sol}}(T(\dpoints))$ there exists $\theta_1 \in 
\mathop{\mathrm{Sol}}(\dpoints)$ such that
\[
R_{\theta_1}(d) = 0 \iff R_{\theta_2}(T(d)) = 0;
\]
\item $T$-\emph{invariant} if it is both $T\Rightarrow${invariant} and $T\Leftarrow${invariant}.
\end{itemize}
\end{definition}

Obviously, an estimation problem which is invariant with respect to two transformations $T_1$ and  $T_2$, is also invariant to their composition $T_2 \circ T_1$.
\begin{note}
If $T_1$ and $T_2$ are two transformations such that
\begin{itemize}
\item for data $\dpoints$ the problem is $T_1$-invariant, and
\item for data $T_1(\dpoints)$ the problem is is $T_2$-invariant, 
\end{itemize}
then the estimation problem is $T_2 \circ T_1$-invariant for data $\dpoints$.
\end{note}

\subsection{Rotation invariance}
\begin{theorem}\label{thm:rotation_invariance}
For the Bombieri norm \eqref{eq:bombieri} and $\calA$ of the form
\begin{equation}\label{eq:union_homog}
\calA \sim \degset{\nvar}{\typdeg_1} \cup \cdots \cup \degset{\nvar}{\typdeg_M},
\end{equation}
the problems  \eqref{eq:ols_opt_problem}, \eqref{eq:als1_opt_problem} and \eqref{eq:als2_opt_problem2} are $T$-invariant for any orthogonal transformation and any dataset $\dpoints$.
\end{theorem}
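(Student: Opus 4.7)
Fix an orthogonal $K$ and set $T(d) = Kd$. For each multidegree $\alpha^{(j)}$, the polynomial $(Kd)^{\alpha^{(j)}}$ is homogeneous of total degree $|\alpha^{(j)}|$ in $d$. Under hypothesis \eqref{eq:union_homog}, the columns of $\calA$ split by degree into blocks $\calA_1, \ldots, \calA_M$, and within each degree $\typdeg_i$ the monomials $\{d^\alpha : \alpha \in \degset{\nvar}{\typdeg_i}\}$ form a basis of the space of homogeneous polynomials of that degree. Hence there is a unique invertible block-diagonal $M_K \in \bbR^{\nmind \times \nmind}$ with $\phi_\calA(Kd) = M_K \phi_\calA(d)$, whose $i$-th block encodes the action of $K$ on degree-$\typdeg_i$ homogeneous polynomials in the monomial basis. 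The proof will proceed by the change of variables $\theta \mapsto \theta' := M_K^{-\top}\theta$; the identity $R_{\theta'}(Kd) = \theta'^{\top} M_K\phi_\calA(d) = R_\theta(d)$ already guarantees the zero-set correspondence required by Definition~\ref{def:invariance}.

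\textbf{Key ingredient: Bombieri isometry of $M_K^\top$.} The first, and to my mind only nontrivial, step is to prove $\|M_K^\top \theta\|_B = \|\theta\|_B$. By block-diagonality this reduces to the analogous statement on each degree-$\typdeg_i$ summand, i.e., to showing that composition with the orthogonal change of variables $d \mapsto Kd$ is an isometry for the Bombieri norm on homogeneous polynomials of fixed degree. The cleanest route is to identify $\|\cdot\|_B^2$ on degree-$\typdeg$ forms with (a fixed multiple of) the squared Frobenius norm of the associated symmetric tensor of $\typdeg$-th order coefficients; since $K^{\otimes \typdeg}$ is orthogonal whenever $K$ is, Frobenius norms of such tensors are preserved, hence so are Bombieri norms. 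This identification (equivalently, the apolarity-inner-product viewpoint) is the main obstacle: it requires machinery outside the combinatorics-of-multidegrees language used elsewhere in the paper. The multinomial weights $\alpha!/|\alpha|!$ in \eqref{eq:bombieri} are precisely tuned to make the isometry hold, which is why the Bombieri norm (and not the standard $2$-norm) is indispensable here.

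\textbf{Cost function invariance.} With $\theta' = M_K^{-\top}\theta$ one gets immediately $Q_{\text{ols}}(\theta', T(\dpoints)) = \sum_k R_{\theta'}(Kd^{(k)})^2 = \sum_k R_\theta(d^{(k)})^2 = Q_{\text{ols}}(\theta, \dpoints)$. For the ALS cost with known variance, combine the deconvolution formula \eqref{eq:als_cost_def_deconv} with Lemma~\ref{lem:deconvolution_prop}(3): since $\Sigma_0 = I$ and $K I K^{\top} = I$, one has
\[
R_{\theta'}^2 \ast p_{-\sigma^2 I} = (R_\theta^2 \circ T^{-1}) \ast p_{-\sigma^2 I} = (R_\theta^2 \ast p_{-\sigma^2 I}) \circ T^{-1},
\]
so evaluating at $Kd^{(k)}$ and summing gives $Q_{\text{als},\sigma}(\theta', T(\dpoints)) = Q_{\text{als},\sigma}(\theta, \dpoints)$.

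\textbf{Conclusion.} The map $\theta \mapsto M_K^{-\top}\theta$ preserves the unit Bombieri sphere and each cost function, so it is a bijection between the solution sets of \eqref{eq:ols_opt_problem} and \eqref{eq:als1_opt_problem} for $\dpoints$ and $T(\dpoints)$, respectively. For the unknown-variance estimator, the reformulation \eqref{eq:als2_opt_problem2} is also preserved: the equation $Q_{\text{als},\widehat{\sigma}}(\widehat{\theta},\dpoints)=0$ transfers as above, and the semidefiniteness condition $Q_{\text{als},\widehat{\sigma}}(b,\dpoints)\ge 0$ for all $b \neq 0$ survives because $b \mapsto M_K^{\top} b$ is a bijection of $\bbR^{\nmind}\setminus\{0\}$, so the minimal $\widehat{\sigma}$ is unchanged. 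Combined with the zero-set correspondence $R_{\theta'}(T(d)) = R_\theta(d)$ from the setup, this yields $T$-invariance in both senses of Definition~\ref{def:invariance}.
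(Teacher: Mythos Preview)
Your proof is correct and follows essentially the same route as the paper: construct the linear parameter transformation induced by $T$ (your $M_K^{-\top}$ is the paper's $\calI_T$), argue it is a Bombieri-norm isometry, and then use Lemma~\ref{lem:deconvolution_prop}(3) together with $K\Sigma_0 K^\top = I$ to transfer each cost function, concluding via the reformulation \eqref{eq:als2_opt_problem2} for the unknown-variance case. The only substantive difference is that you sketch a proof of the Bombieri isometry via the symmetric-tensor/Frobenius-norm identification, whereas the paper simply cites this as a known property of the Bombieri norm.
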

{\bf Proof. } We divide the proof in three steps
\begin{enumerate}
\item (Parameter transformation.) An affine transformation applied to the data points can be mapped to transformation of parameters. Since the set of $\calA$ has the form \eqref{eq:union_homog}, the polynomial $R_{\theta}(d)$ is a sum of homogeneous polynomials
\[
R_{\theta}(d) = R_{\theta,1}(d) + \cdots + R_{\theta,M}(d)
\]
of  degrees $\{\ell_1,\ldots, \ell_K\}$. A linear transformation $T(d) = Kd$ maps homogeneous polynomials to homogeneous polynomials, hence there exists a parameter transformation $\calI_{T}:\bbR^{m} \to\bbR^{m}$, such that 
\[
R_{\theta}(d) = R_{\calI_{T}(\theta)}(T(d)) 
\]
holds in polynomial sense. 

For the inverse linear transformation $T^{-1}$, we have that 
\[
\calI_{T^{-1}}\big(\calI_{T}(\theta)\big) = \theta \quad \mbox{for all}\quad \theta \in \bbR^{\nvar}.
\]
Since $\calI_{T}$ is linear, it is a bijection that maps $\bbR^{m}$ to itself. 

If $T$ is an orthogonal transformation, from the properties of the Bombieri norm \cite[\S 5.3.E.7]{Borwein.Erdelyi95-Polynomials}, we have that $\|\calI_{T}(\theta)\|^2_B = \|\theta\|^2_B$, i.e., the transformation $\calI_{T}$ preserves the Bombieri norm.

\item (Invariance of  $\widehat{\theta}_{\text{ols}}$.) By definition of $Q_{\text{ols}}$, for any $\dpoints$,  we have that
\[
Q_{\text{ols}}(\theta,\dpoints) = \sum\limits_{k=1}^{N} R^2_{\theta}(\dpoint^{(k)})
= \sum\limits_{k=1}^{N} R^2_{\calI_{T}(\theta)}(T(\dpoint^{(k)})) = 
Q_{\text{ols}}(\calI_{T}(\theta),T(\dpoints)).
\]
Therefore, we have that
\begin{equation}\label{eq:ols_transformation}
\min_{\|\theta_1\|_B = 1} Q_{\text{ols}}(\theta_1,\dpoints)  = \min_{\|\theta_2\|_B = 1}Q_{\text{ols}} (\theta_2,T(\dpoints)),
\end{equation}
where $\theta_2 = \calI_{T}(\theta_1)$ is an invertible change of variables. It is easy to see from \eqref{eq:ols_transformation} that the conditions of Definition~\ref{def:invariance} are met 
for the estimation problem \eqref{eq:ols_opt_problem}.

\item (Invariance of  $\widehat{\theta}_{\text{als},\sigma}$.)
By Lemma~\ref{lem:deconvolution_prop}, we have that for any $\dpoints$ 
\begin{equation}\label{eq:als_transformation}
\begin{split}
&Q_{\text{als},\sigma}(\theta,\dpoints) = \sum\limits_{k=1}^{N} ((R^2_{\calI_{T}(\theta)} \circ T) \ast p_{-\sigma^2 I})(\dpoint^{(k)})\\
&\quad= \sum\limits_{k=1}^{N} ((R^2_{\calI_{T}(\theta)} \ast p_{-\sigma^2 I})\circ T)(\dpoint^{(k)}) = 
Q_{\text{als},\sigma}(\calI_{T}(\theta),T(\dpoints)),
\end{split}
\end{equation}
where the last but one equality holds because $KK^{\top} = I$. Therefore, we have that
\[
\min_{\|\theta_1\|_B = 1} Q_{\text{als},\sigma}(\theta_1,\dpoints)  = \min_{\|\theta_2\|_B = 1}  Q_{\text{als},\sigma} (\theta_2,T(\dpoints)),
\]
where $\theta_2 = \calI_{T}(\theta_1)$ is an invertible change of variables. Thus, problem \eqref{eq:ols_opt_problem} is $T$-invariant.

\item (Invariance of $\widehat{\theta}_{\text{als}}$.)
For this proof, we use the formulation \eqref{eq:als2_opt_problem2}. From \eqref{eq:als_transformation}, we have that the invertible change of parameters $\calI_T$ combined with transformation of data does not change the value of $Q_{\text{als},\sigma}(\theta,\dpoints)$. Thus the problem \eqref{eq:als2_opt_problem2} is $T$-invariant.
\end{enumerate}
\hfill$\Box$

\subsection{Scaling and translation invariance}

\begin{theorem}\label{thm:scaling_translation_invariance}
\begin{enumerate}
\item  For any $\calA$, the problem \eqref{eq:als2_opt_problem2} is invariant with respect to uniform scaling.
\item If  $\calA$ is  of the form
\[
\calA \sim \trgset{\nvar}{\typdeg},
\]
then the problem \eqref{eq:als2_opt_problem2} is translation-invariant.
\end{enumerate}
\end{theorem}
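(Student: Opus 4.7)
My plan is to mimic the three-step structure of the proof of Theorem~\ref{thm:rotation_invariance}: for each transformation $T$ build an explicit parameter transformation $\calI_T$ such that $R_\theta(d)=R_{\calI_T(\theta)}(T(d))$ holds as a polynomial identity; use Lemma~\ref{lem:deconvolution_prop}(3) to push $\calI_T$ through the deconvolution; and then conclude from the formulation \eqref{eq:als2_opt_problem2}. The only new feature compared to Theorem~\ref{thm:rotation_invariance} is that a uniform scaling forces the variance parameter to transform as well, and that translation invariance requires the representability hypothesis $\calA\sim\trgset{\nvar}{\typdeg}$.

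For part~1, take $T(d)=\rho d$ with $\rho\neq 0$. Since $T(d)^{\alpha}=\rho^{|\alpha|}d^{\alpha}$, the identity $R_\theta(d)=R_{\calI_T(\theta)}(T(d))$ is satisfied by the diagonal map $(\calI_T(\theta))_j:=\rho^{-|\alpha^{(j)}|}\theta_j$, which is a bijection of $\bbR^{\nmind}$ for any $\calA$. Because $K=\rho I$ gives $K\Sigma_0 K^{\top}=\rho^{2}I$, Lemma~\ref{lem:deconvolution_prop}(3) applied to $R_\theta^{2}=R_{\calI_T(\theta)}^{2}\circ T$ yields
\[
R_\theta^{2}\ast p_{-\sigma^{2}I}=\bigl(R_{\calI_T(\theta)}^{2}\ast p_{-\rho^{2}\sigma^{2}I}\bigr)\circ T,
\]
and summing over the data points gives the key identity
\[
Q_{\text{als},\sigma}(\theta,\dpoints)=Q_{\text{als},\rho\sigma}\bigl(\calI_T(\theta),T(\dpoints)\bigr).
\]
Thus $(\widehat\theta,\widehat\sigma)$ satisfies the zero/PSD pair \eqref{eq:als2_opt_problem2} for $\dpoints$ iff $(\calI_T(\widehat\theta),\rho\widehat\sigma)$ does so for $T(\dpoints)$, after renormalising; the hypersurface equivalence in Definition~\ref{def:invariance} is built into $\calI_T$.

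For part~2, take $T(d)=d+h$. Expanding $(d+h)^{\alpha}$ by the multinomial theorem produces only monomials $d^{\beta}$ with $|\beta|\le|\alpha|$, so when $\calA\sim\trgset{\nvar}{\typdeg}$ the polynomial $R_\theta(d-h)$ remains in the span of $\phi_{\calA}$; this gives a well-defined linear $\calI_T:\bbR^{\nmind}\to\bbR^{\nmind}$ satisfying $R_\theta(d)=R_{\calI_T(\theta)}(d+h)$, whose inverse is $\calI_{T^{-1}}$ by linear independence of $\phi_{\calA}$. Here $K=I$, so $K\Sigma_0 K^{\top}=I$ and Lemma~\ref{lem:deconvolution_prop}(3) produces
\[
Q_{\text{als},\sigma}(\theta,\dpoints)=Q_{\text{als},\sigma}\bigl(\calI_T(\theta),T(\dpoints)\bigr),
\]
so $\sigma$ is preserved and the same argument as above establishes $T$-invariance of \eqref{eq:als2_opt_problem2}.

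The technical point that needs the most care is verifying that $\calI_T$ is a bijection of $\bbR^{\nmind}$ in part~2 without assuming more than $\calA\sim\trgset{\nvar}{\typdeg}$; the lower-set structure of $\trgset{\nvar}{\typdeg}$ is exactly what keeps $R_{\calI_T(\theta)}$ expressible in $\phi_\calA$. I do not expect any serious obstacle: both parts reduce, via Lemma~\ref{lem:deconvolution_prop}(3), to the change-of-variables identity already used for rotations, with the only novelty being the simultaneous rescaling $\sigma\mapsto\rho\sigma$ in part~1.
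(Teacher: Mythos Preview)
Your proposal is correct and, for part~1, essentially identical to the paper: both derive the identity $Q_{\text{als},\sigma}(\theta,\dpoints)=Q_{\text{als},\rho\sigma}(\calI_T(\theta),T(\dpoints))$ from Lemma~\ref{lem:deconvolution_prop}(3) applied with $K=\rho I$, and then read off the invariance of \eqref{eq:als2_opt_problem2} from the joint change of variables $(\theta,\sigma)\mapsto(\calI_T(\theta),\rho\sigma)$.

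For part~2 your route differs slightly from the paper's. You build $\calI_T$ by a direct multinomial expansion of $(d+h)^{\alpha}$, using that $\trgset{\nvar}{\typdeg}$ is a lower set so that all the resulting monomials $d^{\beta}$ with $|\beta|\le|\alpha|$ remain in $\phi_{\calA}$. The paper instead homogenizes: it views $R_\theta(d)$ as a homogeneous polynomial of degree $\typdeg$ in the extended variable $\bsm d\\1\esm$, so that the translation $d\mapsto d+h$ becomes a linear map in homogeneous coordinates, and the induced coefficient map $\calJ_T$ is automatically linear and invertible (with inverse $\calJ_{T^{-1}}$). Both arguments are valid and lead to the same identity $Q_{\text{als},\sigma}(\theta,\dpoints)=Q_{\text{als},\sigma}(\calI_T(\theta),T(\dpoints))$ via Lemma~\ref{lem:deconvolution_prop}(3) with $K=I$. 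Your approach is a bit more hands-on and makes the role of the lower-set property explicit; the paper's homogenization is more conceptual and explains in one line why precisely $\trgset{\nvar}{\typdeg}$ (and not an arbitrary lower set) appears --- it is exactly the dehomogenization of $\degset{\nvar+1}{\typdeg}$.
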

{\bf Proof. }
\begin{enumerate}
\item (Scaling invariance.) In this case, we have that the linear transformation has the form $T(d) = (\rho I) d$. Then, similarly to \eqref{eq:als_scale_transformation} have that
\begin{equation}\label{eq:als_scale_transformation}
\begin{split}
Q_{\text{als},\sigma}(\theta,\dpoints) &= \sum\limits_{k=1}^{N} (R^2_{\theta} \ast p_{-\sigma^2 I})(\dpoint^{(k)})
= \sum\limits_{k=1}^{N} ((R^2_{\calI_{T}(\theta)} \circ T) \ast p_{-\sigma^2 I})(\dpoint^{(k)})\\
&= \sum\limits_{k=1}^{N} ((R^2_{\calI_{T}(\theta)} \ast p_{-\sigma^2\rho^2 I})\circ T)(\dpoint^{(k)}) = 
Q_{\text{als},\rho\sigma}(\calI_{T}(\theta),T(\dpoints)).
\end{split}
\end{equation}
We have that $\theta_2 = \calI_{T}(\theta_1)$ and $\sigma_2 = \rho\sigma_1$ is an invertible change of variables. 
Combined with transformation of data, the change of variables, does not change the value of $Q_{\text{als},\sigma}(\theta,\dpoints)$.
Therefore, the problem \eqref{eq:als2_opt_problem2} is $T$-invariant.

\item (Translation invariance.) In this case, the affine transformation is $T(d) = d+b$, where $b \in \bbR^{\nvar}$. Since $\calA = \trgset{\nvar}{\typdeg}$, the polynomial $R_{\theta}(d)$ can be viewed as a homogeneous polynomial $f$ of degree $\typdeg$ in homogeneous coordinates:
\[
R_{\theta}(d) = f\left(\bmx d\\1\emx\right).
\]
The affine transformation $T$ is a linear transformation in homogeneous coordinates, and we have that 
\[
R_{\theta}(d) = f\left(\bmx d\\1\emx\right) = f_2\left(\bmx T(d)\\1\emx\right) = 
R_{\calJ_{T}(\theta)}(d),
\]
where $\calJ_{T}: \bbR^{m} \to \bbR^{m}$. As in the proof of Theorem~\ref{thm:rotation_invariance}, we have that 
\[
\calJ_{T}(\calJ_{T^{-1}}(\theta)) = \theta.
\]
Since $\calJ_T$ is linear, it is a bijection from $\bbR^{m}$ to $\bbR^{m}$. Similarly to \eqref{eq:als_transformation} and \eqref{eq:als_scale_transformation}, have that
\begin{equation*}
\begin{split}
Q_{\text{als},\sigma}(\theta,\dpoints) &= \sum\limits_{k=1}^{N} (R^2_{\theta} \ast p_{-\sigma^2 I})(\dpoint^{(k)})
= \sum\limits_{k=1}^{N} ((R^2_{\calJ_{T}(\theta)} \circ T) \ast p_{-\sigma^2 I})(\dpoint^{(k)})\\
&= \sum\limits_{k=1}^{N} ((R^2_{\calJ_{T}(\theta)} \ast p_{-\sigma^2 I})\circ T)(\dpoint^{(k)}) = 
Q_{\text{als},\sigma}(\calJ_{T}(\theta),T(\dpoints)).
\end{split}
\end{equation*}
Hence, $\calJ_{T}$ is an invertible change of variables, which does not change the value of $Q_{\text{als},\sigma}(\theta,\dpoints)$ when combined with transformation of data. 
Thus, the problem \eqref{eq:als2_opt_problem2} is $T$-invariant.
\end{enumerate}
\hfill$\Box$

\begin{note}
Theorems~\ref{thm:rotation_invariance}~and~\ref{thm:scaling_translation_invariance}  generalize  Theorems 28, 30, and 31 of \cite{Shklyar.etal07JoMA-conic}.
\end{note}

\section{Numerical examples}\label{sec:numerical}
All the examples in this section are reproducible and available at \url{http://github.com/slra/als-fit}.
\subsection{Invariance of the estimators}
We consider the example ``Special data'' from \cite{Gander.etal94BNM-Least}. The dataset consist of $8$ points, which are given by
\begin{equation}
\dpoints = 
\left\{
\bmx 1\\7 \emx,
\bmx 2\\6 \emx,
\bmx 5\\8 \emx,
\bmx 7\\7 \emx,
\bmx 9\\5 \emx,
\bmx 3\\7 \emx,
\bmx 6\\2 \emx,
\bmx 8\\4 \emx
\right\}.
\end{equation}
Next, we consider two affine similarity transformations of the dataset
\[
\dpoints_1 = T_1(\dpoints),\quad \dpoints_2 = T_2(\dpoints),
\]
where the
\[
T_1(d) = d + \bmx-13\\-3\emx,\quad T_2(d) =  
1.5 \left(
\bmx
\cos(\beta) & \sin(-\beta)\\ 
\sin(\beta) & \cos(\beta)  
\emx d +
\bmx
6\\3
\emx
\right), \; \beta = -\frac{\pi}{4}.
\]
For each of the datasets we compute ${\widehat{\theta}}_{\text{ols}}$, ${\widehat{\theta}}_{\text{als},\sigma}$, and
${\widehat{\theta}}_{\text{als}}$, for the Bombieri norm. In Fig.~\ref{fig:similarity_invariance}, it is shown that only ${\widehat{\theta}}_{\text{als}}$ remains invariant under the transformations $T_1$ and $T_2$. This agrees with the results of Section~\ref{sec:invariance}, since both transformations contain a translation.
\begin{figure}[!ht]%
\centering%
\includegraphics[width=0.75\textwidth]{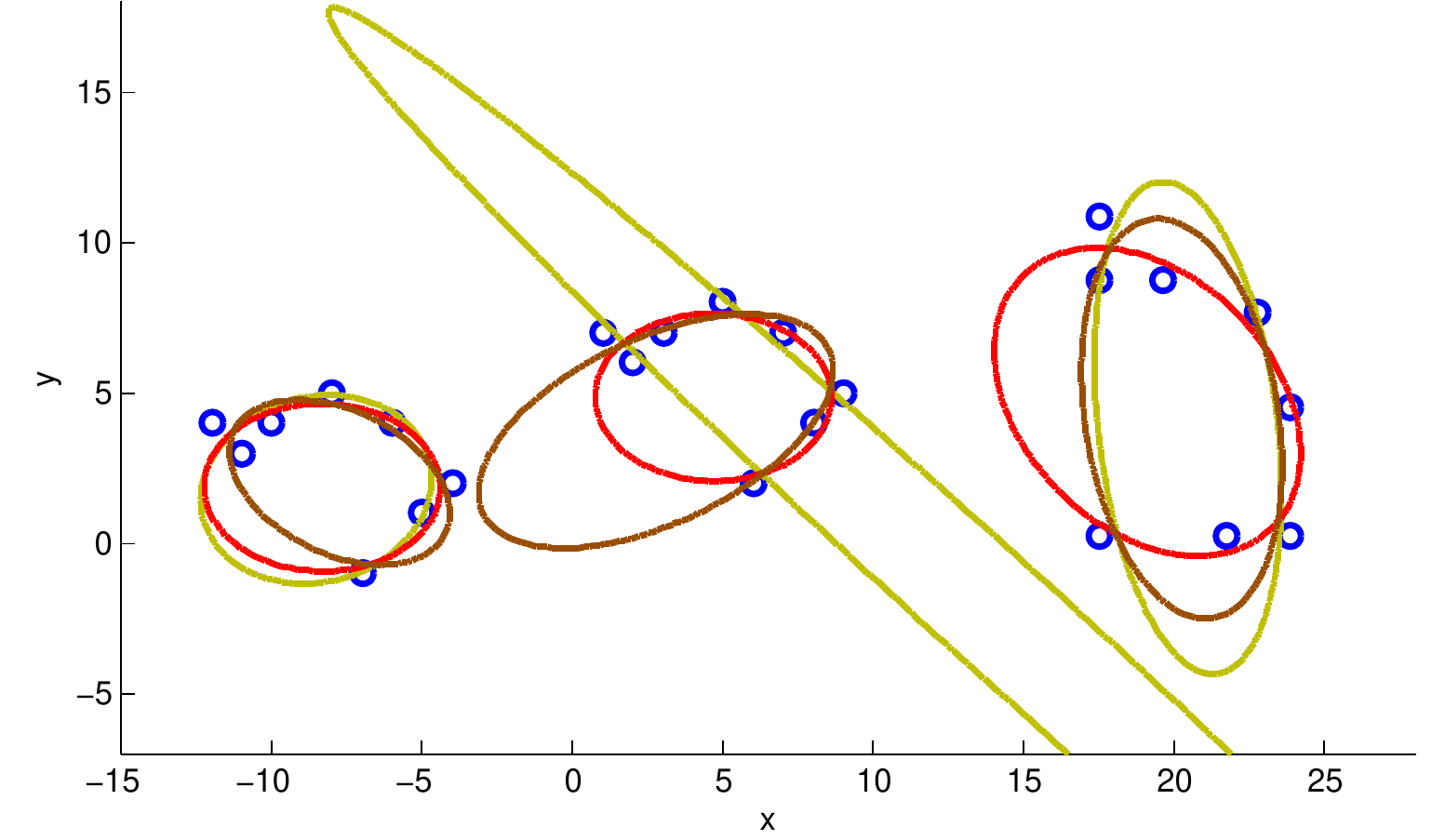}%
\caption{Fitting a conic section: blue circles --- data points ${d}^{(j)}$; green curve --- ${\widehat{\theta}}_{\text{ols}}$, brown curve --- ${\widehat{\theta}}_{\text{als},\sigma}$, red curve --- ${\widehat{\theta}}_{\text{als}}$.}%
\label{fig:similarity_invariance}
\end{figure}

Next, we demonstrate the importance of Bombieri norm for rotation invariance of ${\widehat{\theta}}_{\text{ols}}$ and ${\widehat{\theta}}_{\text{als},\sigma}$. We consider the dataset $\dpoints$ with coordinates given in Table~\ref{tab:dataset_rotinvar}.
\begin{table}[!hbt]
\caption{Test dataset for rotation invariance. First row: index of the point ($j\in\{0,\ldots, 12\}$). Second and third rows: coordinates of the points.}
\label{tab:dataset_rotinvar}
\begin{center}
{\everymath{\scriptstyle}  
\pgfplotstabletypeset[every head row/.style={before row={},after row=\hline},%
string replace={NaN}{},%
every first column/.style={column type/.add={|}{}}]{Dtestrot.txt}}
\end{center}
\end{table}

We also construct a transformed dataset $\dpoints_1$, which is $\dpoints$ rotated by $\frac{2\pi}{3}$ around the origin. Next, we fix $\calA \sim \trgset{2}{2}$, and calculate ${\widehat{\theta}}_{\text{ols}}$ for two different norms: Bombieri norm and the ordinary $2$-norm. In Fig.~\ref{fig:rotation_invariance}, the results of fit for two estimators are shown ($\widehat{\theta}_{\text{als}}$ is shown for reference). 
\begin{figure}[!ht]%
\centering%
\includegraphics[width=0.5\textwidth]{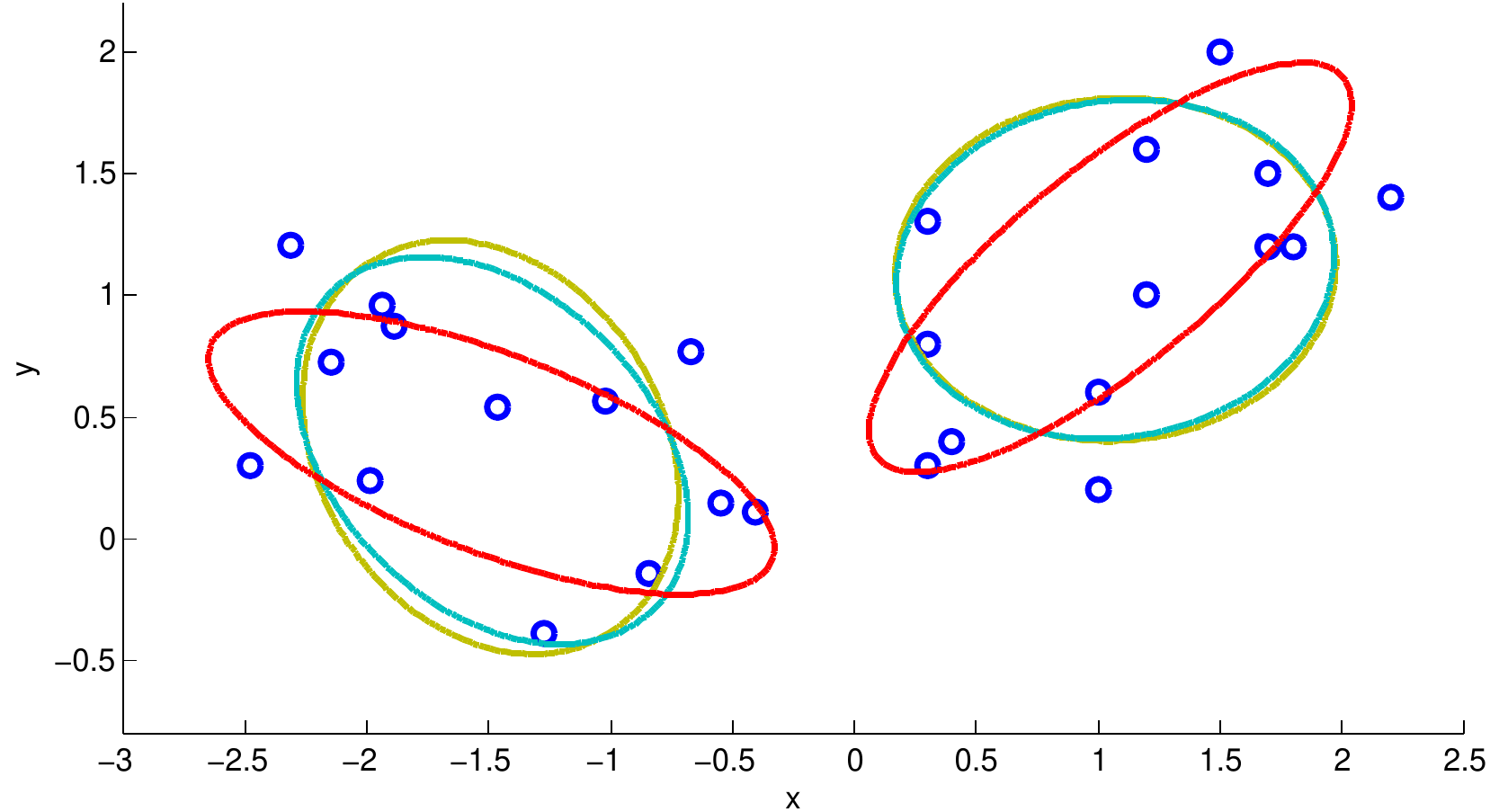}%
\caption{Fitting a conic section: blue circles --- data points ${d}^{(j)}$; green curve --- ${\widehat{\theta}}_{\text{ols}}$ (Bombieri norm), blue curve --- ${\widehat{\theta}}_{\text{ols}}$ ($2$-norm), red curve --- ${\widehat{\theta}}_{\text{als}}$.}%
\label{fig:rotation_invariance}
\end{figure}
The results in Fig~\ref{fig:rotation_invariance} show that ${\widehat{\theta}}_{\text{ols}}$ is invariant under rotation only if the Bombieri norm is used.

\subsection{Consistency of the estimators}
Next, we show the consistency of the estimators, proved in \cite{Shklyar09PhD-Consistency}.
For each $\npoints$, we define the set of true data points $\overline{\dpoints}_N$.
For each $j=1,\ldots,M$, we draw a realization of the noisy data points according to \eqref{eq:eiv}, and denote it by $\dpoints_{N,j}$. For an estimator $\widehat{\theta}$, we compute its value for the  $j$-th dataset as $\widehat{\theta}_{N,j}$, which allows us to estimate the spread of the estimator as
\[
s(\widehat{\theta}, N) := \frac{\sum\limits_{j=1}^M \sin^2 (\angle(\widehat{\theta}_{N,j}, \overline{\theta}))}{M} = \frac{\sum\limits_{j=1}^M 1 - \frac{\left(\overline{\theta}^{\top}\widehat{\theta}_{N,j}\right)^{2}}{\|\widehat{\theta}_{N,j}\|_2^2 \|\overline{\theta}\|_2^2}}{M}.
\]
The sum of squared sines is chosen because the estimates $\widehat{\theta}$ and $-\widehat{\theta}$ are equivalent (since the parameter is essentially defined on the projective space).

We consider an example of eight curve, which has an implicit representation
\[
x^4 = x^2-y^2,
\]
and a parametric representation
\[
\begin{split}
x(t) &= \sin(2\pi t), \\
y(t) &= \sin(2\pi t) \cos(2\pi t).
\end{split}
\]%
%
For each $N$, we define the set of true data points
as uniformly distributed in the parameter, i.e.
\[
\overline{\dpoints}_{\npoints} = \left\{\bmx x\left(0\right)\\ y\left(0\right)\emx, \bmx x\left(\frac{1}{N}\right)\\ y\left(\frac{1}{N}\right)\emx, \ldots, \bmx x\left(\frac{N-1}{N}\right)\\ y\left(\frac{N-1}{N}\right)\emx\right\}.
\]

A realization of noisy $\dpoints_{100,1}$ is shown in figure is plotted in Fig.~\ref{fig:eight}. Fig.~\ref{fig:eight} illustrates the meaning of consistency: all the points are noisy, but with increasing number of data points, the estimate approaches the true value. We see that even for small noise, the OLS estimator gives poor results. 

\begin{figure}[!ht]%
\centering%
\includegraphics[width=0.75\textwidth]{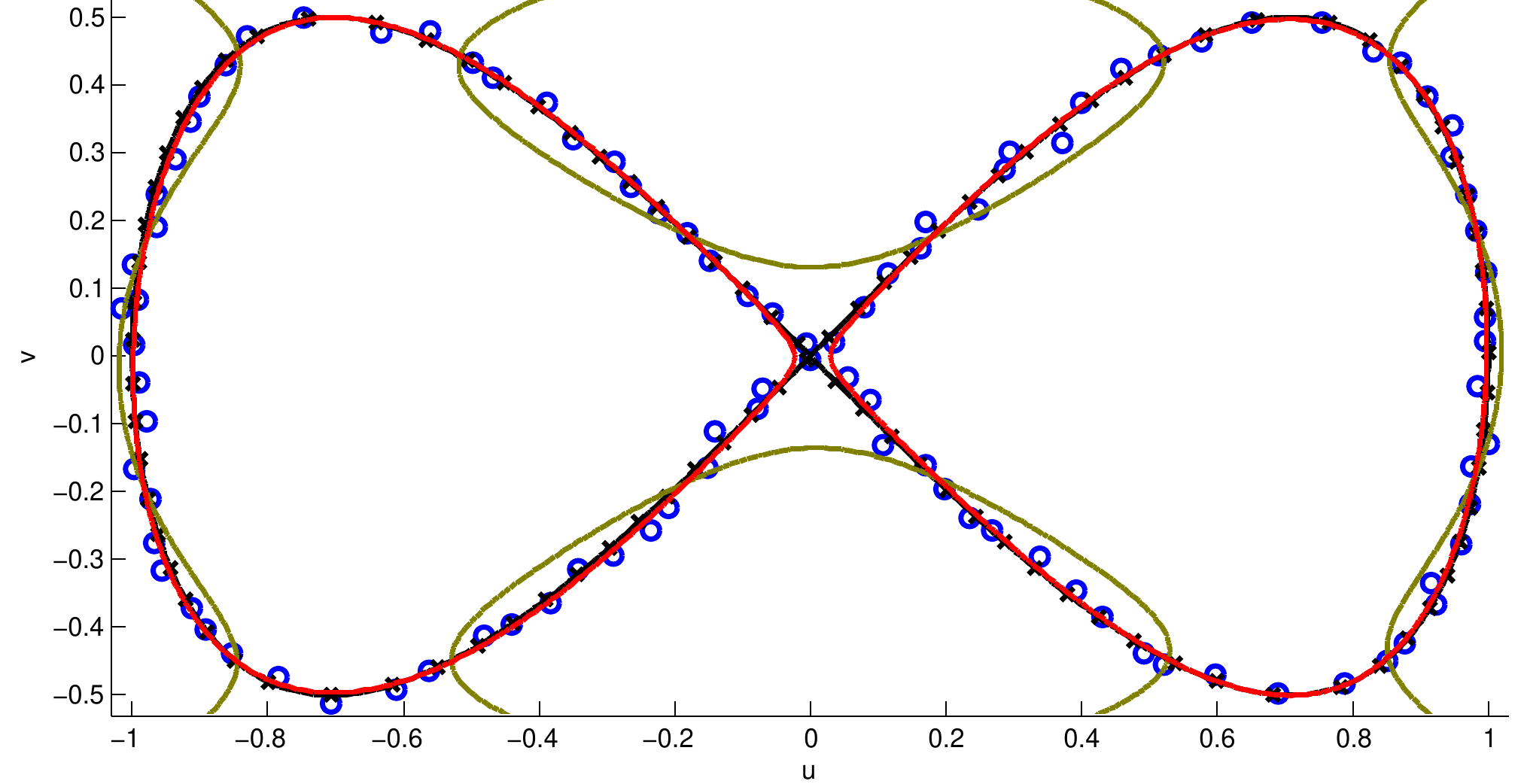}%
\caption{Eight curve: blue circles --- data points ${d}^{(j)}$; black curve --- true curve, black crosses --- true data points $\overline{d}^{(j)}$, green curve --- ${\widehat{\theta}}_{\text{ols}}$, red curve --- ${\widehat{\theta}}_{\text{als}}$.}%
\label{fig:eight}
\end{figure}

We fix the matrix of multidegrees as $\calA\sim \trgset{2}{4}$, consider number of data points as $N = 2^j, j = 7, \ldots,17$, and set the number of realizations to $M = 100$. The noise standard deviation is  $\sigma = 0.01$.
In Fig.~\ref{fig:eight_consistency}, we plot the spread of the estimators $s(\cdot, N)$ depending on $N$.  We also consider two noise scenarios: Gaussian noise ($\widetilde{d} \sim \normal(0,\sigma^2 I)$) and uniform noise with the same variance of the coordinates ($\widetilde{d}_j$ are  uniformly distributed on $[-\sqrt{3}\sigma;\sqrt{3}\sigma]$). 

\begin{figure}[!ht]
\centering
\pgfplotsset{width=0.45\linewidth}
\begin{tikzpicture}[baseline]
\begin{loglogaxis}[
name=plot1,
legend columns=1, legend style={at={(0.1, 0.1)}, anchor=south west},
xtickten={1,2,3,4,5},
xlabel={N}, ylabel={$s(\cdot, N)$}
]
\addplot [no marks] table[x=Ns, y=ols]{eightNs_noise1.txt};
\addlegendentry{$\widehat{\theta}_{\text{ols}}$}
\addplot [red, no marks] table[x=Ns, y=als2]{eightNs_noise1.txt};
\addlegendentry{$\widehat{\theta}_{\text{als}}$}
\end{loglogaxis}
\hskip.5cm
\begin{loglogaxis}[
at=(plot1.right of south east), anchor = left of south west,
legend columns=1, legend style={at={(0.1, 0.1)}, anchor=south west},
xtickten={1,2,3,4,5},
xlabel={N}, ylabel={$s(\cdot, N)$}
]
\addplot [no marks] table[x=Ns, y=ols]{eightNs_noise2.txt};
\addlegendentry{$\widehat{\theta}_{\text{ols}}$}
\addplot [red, no marks] table[x=Ns, y=als2]{eightNs_noise2.txt};
\addlegendentry{$\widehat{\theta}_{\text{als}}$}
\end{loglogaxis}
\end{tikzpicture}
\caption[estCovScenario1]{RMSE of the errors of the estimators, eight curve. Left: Gaussian noise. Right: uniform noise.}
\label{fig:eight_consistency}
\end{figure}
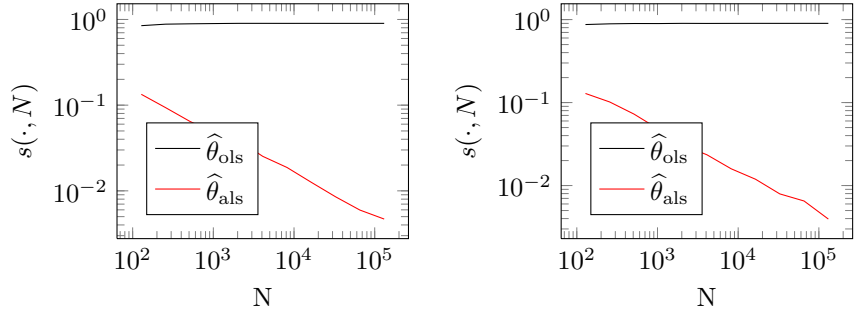

As shown in Fig.~\ref{fig:eight_consistency}, for the algebraic fit
the RMS error converges to a non-zero value, whereas for the ALS estimator, the RMS error converges to zero, as predicted by Theorem~\ref{thm:consistency}. Surprisingly, the convergence to $0$ also seems to take place for the wrong (uniform) noise model. Similar results are observed for the estimate of $\widehat{\sigma}^2$, for which the RMSE plots are shown in Fig.~\ref{fig:eight_consistency_sigmahat}.

\begin{figure}[!ht]
\centering
\pgfplotsset{width=0.45\linewidth}
\begin{tikzpicture}[baseline]
\begin{loglogaxis}[
name=plot1,
xtickten={1,2,3,4,5},
xlabel={N}, ylabel={$RMSE(\widehat{\sigma}^2)$}
]
\addplot [no marks] table[x=Ns, y=sigmahat]{eightNs_noise1.txt};
\end{loglogaxis}
\hskip.5cm
\begin{loglogaxis}[
at=(plot1.right of south east), anchor = left of south west,
xtickten={1,2,3,4,5},
xlabel={N}, ylabel={$RMSE(\widehat{\sigma}^2)$}
]
\addplot [no marks] table[x=Ns, y=sigmahat]{eightNs_noise2.txt};
\end{loglogaxis}
\end{tikzpicture}
\caption[estCovScenario1]{RMSE of the errors of $\widehat{\sigma}^2$, eight curve. Left: Gaussian noise. Right: uniform noise.}
\label{fig:eight_consistency_sigmahat}
\end{figure}

Finally, we study the behavior of the estimates as $\sigma$ varies. This is the setting which is often used in the literature on curve fitting \cite{Chernov10-Circular}.
 We fix $N=1000$ and choose $\sigma=10^{-6} \cdot 2^{j}$, $j=0,\ldots,13$, and plot relative error $\frac{s(\widehat{\theta}, N)}{\sigma}$ and scaled RMSE of $\sigma^2$, depending on $\sigma$ in Fig.~\ref{fig:eight_consistency_varsigma}. 
The added noise is uniform (the wrong noise model).

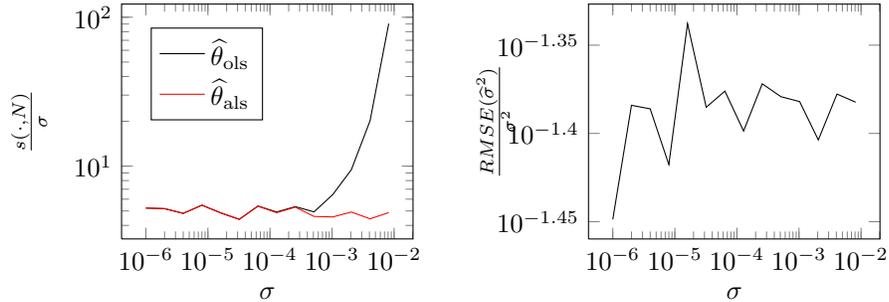
\begin{figure}[!ht]
\centering
\pgfplotsset{width=0.45\linewidth}
\begin{tikzpicture}[baseline]
\begin{loglogaxis}[
name=plot1,
legend columns=1, legend style={at={(0.1, 0.9)}, anchor=north west},
xtickten={-6,-5,-4,-3,-2},
xlabel={$\sigma$}, ylabel={$\frac{s(\cdot, N)}{\sigma}$}
]
\addplot [no marks] table[x=sigmas, y expr=\thisrowno{2}/\thisrowno{1}]{eightSigmas_noise2.txt};
\addlegendentry{$\widehat{\theta}_{\text{ols}}$}
\addplot [red, no marks] table[x=sigmas, y expr=\thisrowno{3}/\thisrowno{1}]{eightSigmas_noise2.txt};
\addlegendentry{$\widehat{\theta}_{\text{als}}$}
\end{loglogaxis}
\hskip.5cm
\begin{loglogaxis}[
at=(plot1.right of south east), anchor = left of south west,
xtickten={-6,-5,-4,-3,-2},
xlabel={$\sigma$}, ylabel={$\frac{RMSE(\widehat{\sigma}^2)}{\sigma^2}$}
]
\addplot [no marks] table[x=sigmas, y expr=\thisrowno{5}/(\thisrowno{1}*\thisrowno{1})]{eightSigmas_noise2.txt};
\end{loglogaxis}
\end{tikzpicture}
\caption[estCovScenario1]{Relative RMSE of the estimators depending on $\sigma$, eight curve. Left: ${\theta}$. Right: $\sigma^2$.}
\label{fig:eight_consistency_varsigma}
\end{figure}

In Fig.~\ref{fig:eight_consistency_varsigma}, we see that $\widehat{\theta}_{\text{als}}$ behaves better for higher values of noise, and preserves ratio of the magnitude output error to the magnitude of the input error (which can be interpreted as the condition number of the problem).

\subsubsection{Subspace clustering}
Next, we consider an example with higher dimensions ($\nvar = 3$), and also when the conditions of Theorem~\ref{thm:consistency} are not satisfied. The example is a union of three hyperplanes, which is inspired by an application in subspace clustering \cite{Vidal11ISPM-Subspace}.

Let $b^{(1)},\ldots, b^{(r)} \in \bbR^{\nvar}$ be a pairwise non-collinear vectors, and
\begin{equation}\label{eq:ssclust_model}
\left((b^{(1)})^{\top} d = 0 \right) \vee \cdots\vee
\left((b^{(r)})^{\top} d = 0 \right),
\end{equation}
be a union of hyperplanes, for which the normal vectors are $b^{(j)}$. Then the set of solutions of  \eqref{eq:ssclust_model} is an algebraic hypersurface, since \eqref{eq:ssclust_model} is equivalent to
\begin{equation}\label{eq:factorizability}
\left((b^{(1)})^{\top} d \right) \cdot \cdots\cdot
\left((b^{(r)})^{\top} d  \right) = 0.
\end{equation}
The set of monomials in \eqref{eq:factorizability} is $\degset{\nvar}{r}$. As noted in \cite{Vidal11ISPM-Subspace}, modeling the data as a union of hyperplanes may be posed as an algebraic hypersurface fitting problem. Typically, algebraic fitting (i.e., $\widehat{\theta}_{\text{ols}}$) is used for this purpose $\widehat{\theta}$. In what follows, we show that the ALS fitting should be preferred.

We consider the following three  vectors
\[
b^{(1)} = \bmx0&1&0\emx^{\top},\quad b^{(2)} = \bmx\sqrt{2}&\sqrt{2}&0\emx^{\top}, \quad b^{(2)} = \bmx\sqrt{3}&\sqrt{3}&\sqrt{3}\emx^{\top}.
\]
We fix the noise standard deviation to $\sigma = 0.05$, and generate the true points as follows. We randomly assign points to the hyperplanes (with equal probability). In each hyperplane, the true points are distributed uniformly in a $1\times 1$ square. An example of noisy data points is shown in Fig.~\ref{fig:subspace}.
\begin{figure}[!ht]%
\centering%
\includegraphics[width=0.5\textwidth]{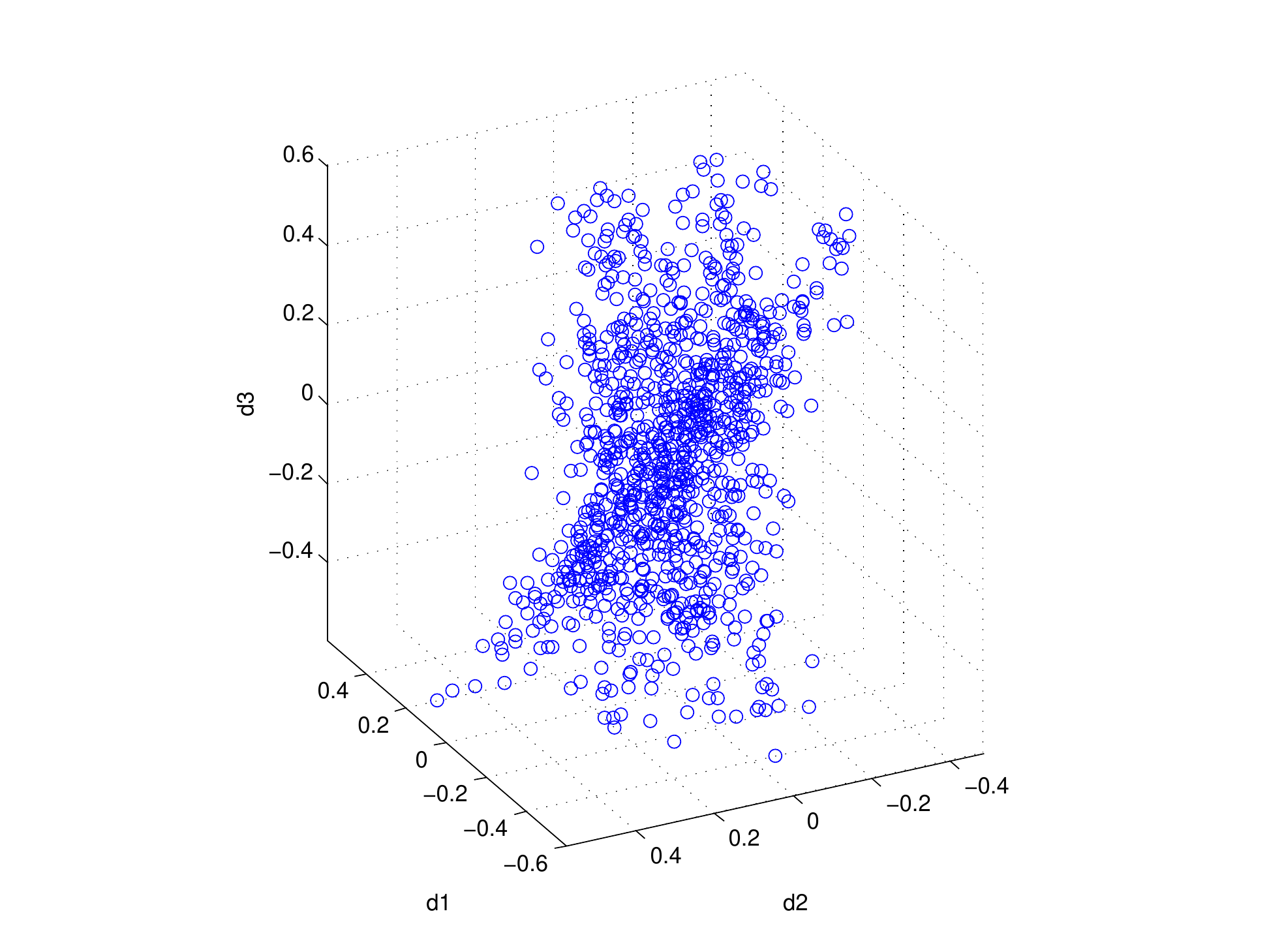}%
\caption{Noisy points on a union of three hyperplanes.}%
\label{fig:subspace}
\end{figure}
Next, we choose the matrix of multidegrees as $\calA\sim \degset{3}{3}$, consider number of data points as $N = 2^j, j = 7, \ldots,17$, and set the number of realizations to $M = 100$. The noise standard deviation is  $\sigma = 0.01$.
In Fig.~\ref{fig:subspace_consistency}, we plot the RMSE of the estimators $s(\cdot, N)$. 

\begin{figure}[!ht]
\centering
\pgfplotsset{width=0.45\linewidth}
\begin{tikzpicture}[baseline]
\begin{loglogaxis}[
name=plot1,
legend columns=1, legend style={at={(0.1, 0.1)}, anchor=south west},
xtickten={1,2,3,4,5},
xlabel={N}, ylabel={$s(\cdot, N)$}
]
\addplot [no marks] table[x=Ns, y=ols]{subspaceNs_noise1.txt};
\addlegendentry{$\widehat{\theta}_{\text{ols}}$}
\addplot [red, no marks] table[x=Ns, y=als2]{subspaceNs_noise1.txt};
\addlegendentry{$\widehat{\theta}_{\text{als}}$}
\end{loglogaxis}
\hskip.5cm
\begin{loglogaxis}[
at=(plot1.right of south east), anchor = left of south west,
legend columns=1, legend style={at={(0.1, 0.1)}, anchor=south west},
xtickten={1,2,3,4,5},
xlabel={N}, ylabel={$s(\cdot, N)$}
]
\addplot [no marks] table[x=Ns, y=ols]{subspaceNs_noise2.txt};
\addlegendentry{$\widehat{\theta}_{\text{ols}}$}
\addplot [red, no marks] table[x=Ns, y=als2]{subspaceNs_noise2.txt};
\addlegendentry{$\widehat{\theta}_{\text{als}}$}
\end{loglogaxis}
\end{tikzpicture}
\caption[estCovScenario1]{RMSE of the errors of the estimators, subspace clustering. Left: Gaussian noise. Right: uniform noise.}
\label{fig:subspace_consistency}
\end{figure}
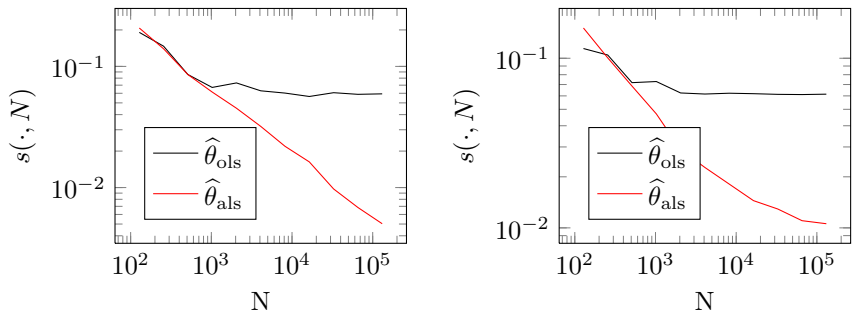

As shown in Fig.~\ref{fig:subspace_consistency}, the OLS estimator is again biased, and the ALS estimator seems to converge to zero for the correct noise model. For the wrong noise model (uniform noise), the estimator seems to be inconsistent, but has a smaller asymptotic bias. 
We note that for small $N$, the OLS estimator is slightly better than the ALS estimator. 
However, for large $N$ the ALS estimator clearly outperforms the algebraic fitting.
Note that the conditions of Theorem~\ref{thm:consistency} are not satisfied, since the set of polynomials $\phi_{\calA}$ does not satisfy Assumption~\ref{ass:closedness}.

\section{Conclusions}
In this paper, we considered the adjusted least squares estimators (in the cases of known and unknown variance) for algebraic hypersurfaces with arbitrary support. We  showed that the matrix coefficients of the matrix polynomial can be constructed  as quasi-Hankel matrices from shifts of the moment array. This allowed us to prove a new sufficient condition for existence of the ALS estimator. We also derived conditions for rotation/scaling/translation invariance of the estimators, and showed that in many cases it is important to use the Bombieri norm. Finally, we demonstrated on numerical experiments that the ALS estimator works well beyond its probabilistic model and known results on its consistency. We believe that the ALS estimator can be used  as a general-purpose hypersurface fitting tool, and that its properties deserve further theoretical and numerical investigation.

\section*{Acknowledgements}
This work was supported by European Research Council under the European Union's Seventh Framework Programme (FP7/2007-2013) / ERC Grant Agreement No. 258581 ``Structured low-rank approximation: Theory, algorithms, and applications'' and Grant Agreement No. 320594 DECODA project.

\section*{References}
\bibliography{als}

\begin{thebibliography}{10}
\expandafter\ifx\csname url\endcsname\relax
  \def\url#1{\texttt{#1}}\fi
\expandafter\ifx\csname urlprefix\endcsname\relax\def\urlprefix{URL }\fi
\expandafter\ifx\csname href\endcsname\relax
  \def\href#1#2{#2} \def\path#1{#1}\fi

\bibitem{Chernov10-Circular}
N.~Chernov, Circular and linear regression: Fitting circles and lines by least
  squares, Vol. 117 of Monographs on Statistics and Applied Probability,
  Chapman \& Hall/CRC, 2010.

\bibitem{Pratt87SCG-Direct}
V.~Pratt, Direct least-squares fitting of algebraic surfaces, SIGGRAPH Comput.
  Graph. 21~(4) (1987) 145--152.
\newblock \href {http://dx.doi.org/10.1145/37402.37420}
  {\path{doi:10.1145/37402.37420}}.

\bibitem{Taubin91IToPAMI-Estimation}
G.~Taubin, Estimation of planar curves, surfaces, and nonplanar space curves
  defined by implicit equations with applications to edge and range image
  segmentation, IEEE Transactions on Pattern Analysis and Machine Intelligence
  13~(11) (1991) 1115--1138.

\bibitem{Sauer07NA-Approximate}
T.~Sauer, Approximate varieties, approximate ideals and dimension reduction,
  Numerical Algorithms 45~(1-4) (2007) 295--313.

\bibitem{Emiris.etal13TCS-Implicitization}
I.~Z. Emiris, T.~Kalinka, C.~Konaxis, T.~L. Ba, Implicitization of curves and
  (hyper)surfaces using predicted support, Theoretical Computer Science 479
  (2013) 81--98.

\bibitem{Vidal11ISPM-Subspace}
R.~Vidal, Subspace clustering, IEEE Signal Processing Magazine 28~(2) (2011)
  52--68.

\bibitem{Vajk.Hethessy03A-Identification}
I.~Vajk, J.~Hetth\'{e}ssy, Identification of nonlinear errors-in-variables
  models, Automatica 39~(12) (2003) 2099--2107.

\bibitem{kpca}
I.~Markovsky, K.~Usevich, Nonlinearly structured low-rank approximation, in:
  Y.~R. Fu (Ed.), Low-Rank and Sparse Modeling for Visual Analysis, Springer,
  2014, pp. 1--22.

\bibitem{Gander.etal94BNM-Least}
W.~Gander, G.~Golub, R.~Strebel, Least-squares fitting of circles and ellipses,
  BIT Numerical Mathematics 34~(4) (1994) 558--578.

\bibitem{Markovsky12-Low}
I.~Markovsky, Low Rank Approximation: Algorithms, Implementation, Applications,
  Communications and Control Engineering, Springer, 2012.

\bibitem{Markovsky.etal04NM-Consistent}
I.~Markovsky, A.~Kukush, S.~V. Huffel, Consistent least squares fitting of
  ellipsoids, Numerische Mathematik 98~(1) (2004) 177--194.

\bibitem{Kukush.etal04CSDA-Consistent}
A.~Kukush, I.~Markovsky, S.~{Van Huffel}, Consistent estimation in an implicit
  quadratic measurement error model, Comput. Statist. Data Anal. 47~(1) (2004)
  123--147.

\bibitem{Shklyar.etal07JoMA-conic}
S.~Shklyar, A.~Kukush, I.~Markovsky, S.~{Van Huffel}, On the conic section
  fitting problem, Journal of Multivariate Analysis 98 (2007) 588--624.

\bibitem{Kendall.Stuart77-advanced}
M.~Kendall, A.~Stuart, The advanced theory of statistics, 4th Edition, Vol. 2:
  Inference and Relationship, Charles Griffin, London, 1977.

\bibitem{Shklyar09PhD-Consistency}
S.~Shklyar, Consistency and comparison of efficiency of estimators in explicit
  measurement error models, Ph.D. thesis, Taras Shevchenko National University
  of Kyiv (2009).

\bibitem{Mourrain.Pan00Joc-Multivariate}
B.~Mourrain, V.~Y. Pan, Multivariate polynomials, duality, and structured
  matrices, Journal of complexity 16~(1) (2000) 110--180.

\bibitem{Fasino.Tilli00LAaiA-Spectral}
D.~Fasino, P.~Tilli, Spectral clustering properties of block multilevel
  {Hankel} matrices, Linear Algebra and its Applications 306~(1–3) (2000)
  155--163.

\bibitem{Borwein.Erdelyi95-Polynomials}
P.~Borwein, T.~Erd\'{e}lyi, Polynomials and Polynomial Inequalities, New York:
  Springer-Verlag, 1995.

\end{thebibliography}
\end{document}